\g@addto@macro{\UrlBreaks}{\UrlOrds}
 \newtheorem{theorem}{Theorem}[section]
 \newtheorem{lemma}[theorem]{Lemma}
 \newtheorem{corollary}[theorem]{Corollary}
 \newtheorem{definition}[theorem]{Definition}
\def\GrabProofArgument[#1]{ #1: \egroup\ignorespaces}
\def\proof{\noindent\textbf\bgroup Proof%
	\@ifnextchar[{\GrabProofArgument}{. \egroup\ignorespaces}}
\newcommand{\ahmadinejadduels}{Ahmadinejad \textit{et al.}}
\newcommand{\yannikakisexpressing}{Yannakakis}
\newcommand{\martinusing}{Martin}
\newcommand{\payoffa}{U^A}
\newcommand{\payoffb}{U^B}
\newcommand{\troopsA}{A}
\newcommand{\troopsB}{B}
\newcommand{\flow}{F}
\newcommand{\battlefields}{\ensuremath{K}}
\newcommand{\troops}{\ensuremath{N}}
\newcommand{\exactlowerbound}{\ensuremath{\frac{\battlefields(\troops - \lceil \frac{\troops+1}{2}\rceil)(\troops - \lceil \frac{\troops+1}{2}\rceil + 1)}{2}}}
\newcommand{\orderlowerbound}{\ensuremath{\Theta(\troops^2\battlefields)}}
\newcommand{\stratmappinga}{\ensuremath{\mathcal{G}^A}}
\newcommand{\stratmappingb}{\ensuremath{\mathcal{G}^B}}
\newcommand{\stratmappingaMRCB}{\ensuremath{\stratmappinga_{\mathcal{M}}}}
\newcommand{\stratmappingbMRCB}{\ensuremath{\stratmappingb_{\mathcal{M}}}}
\newcommand{\mainpolytopea}{\ensuremath{\mathcal{R}^A}}
\newcommand{\mainpolytopeb}{\ensuremath{\mathcal{R}^B}}
\newcommand{\mainpolytopeaMRCB}{\ensuremath{\mainpolytopea_{\mathcal{M}}}}
\newcommand{\mainpolytopebMRCB}{\ensuremath{\mainpolytopeb_{\mathcal{M}}}}
\newcommand{\polytopea}{\ensuremath{\mathcal{P}_A}}
\newcommand{\polytopeb}{\ensuremath{\mathcal{P}_B}}
\newcommand{\polytopeaMRCB}{\ensuremath{\polytopea^{\mathcal{M}}}}
\newcommand{\polytopebMRCB}{\ensuremath{\polytopeb^{\mathcal{M}}}}
\newcommand{\xc}[1]{\operatorname{xc}(#1)}
\newcommand{\slack}[1]{S^{#1}}
\newcommand{\rk}[0]{\ensuremath{\operatorname{rk}_+}}
\newcommand{\resources}[0]{\ensuremath{c}}
\newcommand{\xcMRCB}{\troops^{2\resources}\battlefields}
\newcommand{\orderxcMRCB}{\Theta(\xcMRCB)}
\newcommand{\variablesMRCB}{\ensuremath{\Theta(\troops^{2\resources}\battlefields)}}
\newcommand{\variables}{\ensuremath{\Theta(\troops^2\battlefields)}}
\newcommand{\gpayoff}{\ensuremath{u}} 
\newcommand{\layeredgraph}[1]{\ensuremath{\mathcal{L}^{#1}}}
\def\env@cases{%
  \let\@ifnextchar\new@ifnextchar
  \left.
  \def\arraystretch{1.2}%
  \array{@{}l@{\,\,}l@{}}%
}%
\newcommand{\pr}[2]{\ensuremath{P^{A}_{#1, #2}}}
\newcommand{\weight}[2]{\ensuremath{W^{B}_{#1, #2}}}
\newcommand{\lpvard}[2]{\ensuremath{D^B_{#1, #2}}}
\newcommand{\pipe}{\ensuremath{\mathrel{}|\mathrel{}}}
\newcommand{\footremember}[2]{%
   \footnote{#2}
    \newcounter{#1}
    \setcounter{#1}{\value{footnote}}%
}
\newcommand{\footrecall}[1]{%
 \footnotemark[\value{#1}]
}
\newcounter{proccnt}
\title{Faster and Simpler Algorithm for Optimal Strategies of \\Blotto Game}
\author{
	Soheil Behnezhad\footremember{nsfdarpa}{Supported in part by NSF CAREER award CCF-1053605, NSF BIGDATA grant IIS-1546108, NSF AF:Medium grant CCF-1161365, DARPA GRAPHS/AFOSR grant FA9550-12-1-0423, and another DARPA SIMPLEX grant.},
	Sina Dehghani\footrecall{nsfdarpa},
	Mahsa Derakhshan\footrecall{nsfdarpa}, \\
	MohammadTaghi HajiAghayi\footrecall{nsfdarpa},
	Saeed Seddighin\footrecall{nsfdarpa}
	\\\\
Department of Computer Science, University of Maryland\\
\texttt{\{soheil, dehghani, mahsaa, hajiagha, sseddigh\}@cs.umd.edu
}}
\begin{document}
\renewcommand{\theenumi}{(\roman{enumi}).}
\renewcommand{\labelenumi}{\theenumi}
\sloppy

%
%

\date{}

\maketitle

\begin{abstract}
In the Colonel Blotto game, which was initially introduced by Borel in 1921, two colonels simultaneously distribute their troops across different battlefields. The winner of each battlefield is determined independently by a winner-take-all rule. The ultimate payoff of each colonel is the number of battlefields he wins. The Colonel Blotto game is commonly used for analyzing a wide range of applications from the U.S presidential election, to innovative technology competitions, to advertisement, to sports, and to politics. There has been persistent efforts for finding the optimal strategies for the Colonel Blotto game. After almost a century Ahmadinejad, Dehghani, Hajiaghayi, Lucier, Mahini, and Seddighin~\cite{ahmadinejad2016duels} provided an algorithm for finding the optimal strategies in polynomial time. 

\ahmadinejadduels~\cite{ahmadinejad2016duels} first model the problem by a Linear Program (LP) with both an exponential number of variables and an exponential number of constraints which makes the problem intractable. Then they project their solution to another space to obtain another exponential-size LP, for which they can use Ellipsoid method. However, despite the theoretical importance of their algorithm, it is highly impractical. In general, even Simplex method (despite its exponential running time in practice) performs better than Ellipsoid method in practice.

In this paper, we provide the first polynomial-size LP formulation of the optimal strategies for the Colonel Blotto game. We use linear extension techniques. Roughly speaking, we project the strategy space polytope to a higher dimensional space, which results in lower number of facets for the polytope. In other words, we add a few variables to the LP, such that surprisingly the number of constraints drops down to a polynomial. We use this polynomial-size LP to provide a novel simpler and significantly faster algorithm for finding optimal strategies for the Colonel Blotto game.

We further show this representation is asymptotically tight, which means there exists no other linear representation of the problem with less number of constraints. We also extend our approach to multi-dimensional Colonel Blotto games, where each player may have different sorts of budgets, such as money, time, human resources, etc.

By implementing this algorithm we were able to run tests which were previously impossible to solve in a reasonable time. These informations, allow us to observe some interesting properties of Colonel Blotto; for example we find out the behaviour of players in the discrete model is very similar to the continuous model Roberson~\cite{R06} solved. 

\end{abstract}

\section{Introduction}
In the U.S. presidential election, the President is elected by the Electoral College system. In the Electoral College system, each state has a number of electoral votes, and the candidate who receives the majority of electoral votes is elected as the President of the United States. In most of the states\footnote{All states except Maine and  Nebraska},
a winner-take-all role determines the electoral votes, and the candidate who gets the majority of votes in a state will benefit from all the electoral votes of the corresponding state. 
Since the President is not elected by national popular vote directly, any investment in the states which are highly biased toward a party would be wasted. For example, a Democratic candidate can count on the electoral votes of states like California, Massachusetts, and New York, and a Republican candidate can count on the electoral votes of states like Texas, Mississippi, and South Carolina. This highlights the importance of those states that are likely to choose either parties, and would determine the outcome of the election. These states, known as {\em swing states} or {\em battleground states,} are the main targets of a campaign during the election, e.g., the main battleground states of the 2012 U.S. presidential election were Colorado, Florida, Iowa, New Hampshire, North Carolina, Ohio, Virginia, and Wisconsin. Now answers to the following questions seem to be essential: how can a national campaign distribute its resources like time, people, and money across different battleground states? What is the outcome of the game between two parties?

One might see the same type of competition between two companies which are developing new technologies.
These companies need to distribute their efforts across different markets. 
The winner of each market would become the market-leader and takes almost all the benefits of the corresponding market \cite{KR10,KR12}.  
For instance, consider the competition between Samsung and Apple, where they both invest on developing products like cell-phones, tablets, and laptops, and all can  have different specifications. Each product has its own specific market and the most plausible brand will lead that market. Again, a strategic planner with limited resources would face a similar question: what would be the best strategy for allocating the resources across different markets? 

\noindent \textbf{Colonel Blotto Game.}
The {\em Colonel Blotto} game, which was first introduced by Borel \cite{B21}, provides a model to study the aforementioned problems. This paper was later discussed in an issue of {\em Econometria} \cite{B53,F53a,F53b,V53}. 
Although the Colonel Blotto model was initially proposed to study a war situation, it has been  applied for analyzing the competition in different contexts from sports, to advertisement, and to politics \cite{M93,LP02,MMT05,CKS09,KR10,KR12}.  
In the original Colonel Blotto game two colonels fight against each other over different battlefields. They should simultaneously divide their troops among different battlefields without knowing the actions of their opponents.
A colonel wins a battlefield if and only if the number of his troops dominates the number of troops of his opponent. 
The final payoff of each colonel, in its classical form, is the number of the battlefields he wins. 
The {\em MaxMin} strategy of a player maximizes the minimum gain that can be achieved. In two player zero-sum games a MaxMin strategy is also the optimal strategy, since any other strategy may result in a lower payoff across a rational player. It is also worth mentioning that in zero-sum games a pair of strategies is a Nash equilibrium if and only if both players are playing MaxMin strategies. Therefore finding MaxMin strategies results in finding the optimal strategies for players and also the Nash equilibria of the game. It is easy to show that solving Colonel Blotto is computationally harder than finding the optimal strategies of any two player game. Consider a two player game in which the players have $x$ and $y$ pure strategies, respectively. Now, construct a Blotto game with two battlefield in which player A has $x-1$ troops and player B has $y-1$ troops. Note that in this  Blotto game, the number of strategies of the players are $x$ and $y$ respectively, and one can easily encode the payoff function of the original game in the partial payoffs of the two battlefields. Thus, any solution for Colonel Blotto yields a solution for an arbitrary two player game.

Colonel Blotto is a zero-sum game, but the fact that the number of pure strategies of the agents are exponential in the number of troops and the number of battlefields, makes the problem of finding optimal strategies quite hard. There were several attempts for solving variants of the problem  since 1921  \cite{T49,Bl54,Bl58,Be69,SW81,W05,R06,K07,H07,GP09,KR12}.
Most of the works consider special cases of the problem. For example many results in the literature relax the integer constraint of the problem, and study a {\em continuous} version of the problem where troops are divisible. 
For example, Borel and Ville~\cite{BV38} 
proposed the first solution for three battlefields. Gross and Wagner~\cite{GW50} generalized this result for any number of battlefields. However, they assumed colonels have the same number of troops.
%
Roberson~\cite{R06} computes the optimal strategies of the Blotto games in the continuous version of the problem where all the battlefields have the same weight, i.e. the game is symmetric across the battlefields. Hart~\cite{H07} considered the discrete version, again when the game is symmetric across the battlefields, and solved it for some special cases. Very recently Ahmadinejad, Dehghani, Hajiaghayi, Lucier, Mahini, and Seddighin~\cite{ahmadinejad2016duels} made a breakthrough in the study of this problem by finding optimal strategies for the Blotto games after nearly a century, which brought a lot of attention \cite{cover1,cover2,cover3,cover4,cover5,cover6,cover7,cover8,cover9}. They obtain exponential sized LPs, and then provide a clever use of Ellipsoid method for finding the optimal strategies in polynomial time.

Although theoretically Ellipsoid method is a very powerful tool with deep consequences in complexity and optimization, it is ``too inefficient to be used in practice" \cite{bernhard}. Interior point methods, and Simplex method (even though it has exponential running-time in the worst case) are "far more efficient" \cite{bernhard}. Thus a practical algorithm for finding optimal strategies for the Blotto games remains an open problem. In fact there has been huge studies in existence of efficient LP reformulations for different exponential-size LPs. For example Rothvoss~\cite{rothvoss2014matching} proved that the answer of the long-standing open problem, asking whether a graph's perfect matching polytope can be represented by an LP with polynomial number of constraints, is negative. The seminal work of Applegate and Cohen~\cite{applegate2003making} also provides polynomial-size LPs for finding an optimal oblivious routing.
We are the first to provide a polynomial-size LP for finding the optimal strategies of the Colonel Blotto games. Although \ahmadinejadduels~\cite{ahmadinejad2016duels} use an LP with exponential number of constraints, our LP formulation has only $O(\troops^2\battlefields)$ constraints, where $\troops$ denotes the number of troops and $\battlefields$ denotes the number of battlefields. Consequently we provide a novel simpler and significantly faster algorithm using the polynomial-size LP.

Furthermore we show that our LP representation is asymptotically tight. The rough idea behind obtaining a polynomial-size LP is the following. Given a polytope $P$ with exponentially many facets, we project $P$ to another polytope $Q$ in a higher dimensional space which has polynomial number of facets. Thus basically we are adding a few variables to the LP in order to reduce the number of constraints down to a polynomial.
$Q$ is called the {\em linear extension} of $P$. The minimum number of facets of any linear extension is called the {\em extension complexity}. We show that the extension complexity of the polytope of the optimal strategies of the Colonel Blotto game is \orderlowerbound{}. In other words, there exists no LP-formulation for the polytope of MaxMin strategies of the Colonel Blotto game with fewer than \orderlowerbound{} constraints.

We also extend our approach to the {\em Multi-Resource Colonel Blotto} (MRCB) game. In MRCB, each player has different types of resources. Again the players distribute their budgets in the battlefields. Thus each player allocates a vector of resources to each battlefield. The outcome in each battlefield is a function of both players' resource vectors that they have allocated to that battlefield. MRCB models a very natural and realistic generalization of the Colonel Blotto game. For example in U.S. presidential election, the campaigns distribute different resources like people, time, and money among different states. We provide an LP formulation for finding optimal strategies in MRCB with $\orderxcMRCB{}$ constraints and $\variablesMRCB{}$ variables, where $c$ is the number of resources. We prove this result is also tight up to constant factors, since the extension complexity of MRCB is $\orderxcMRCB$.

By implementing our LP, we observe the payoff of players in the continuous version considered by Roberson~\cite{R06} very well predicts the outcome of the game in the auctionary and symmetric version of our model.

\section{Preliminaries}
Throughout this paper we assume the number of battlefields is denoted by $\battlefields$ and the number of troops of players A and B are denoted by $\troopsA$ and $\troopsB$ respectively. Also in some cases we use $\troops$ to denote the number of troops of an unknown player.

Generally mixed strategies are shown by a probability vector over pure strategies. However at some points in this paper we project this representation to another space that specifies probabilities to each battlefield and troop count pair. More precisely, we map a mixed strategy $x$ of player A to $\stratmappinga(x) = \hat{x} \in \lbrack 0, 1\rbrack^{d(A)}$ where $d(A) = \battlefields\times(\troopsA + 1)$. We may abuse this notation for convenience and use $\hat{x}_{i, j}$ to show the probability the mixed strategy $x$ puts $j$ troops in the $i$-th battlefield. Note that this mapping is not one-to-one. Similarly, we define $\stratmappingb(x)$ to map a mixed strategy $x$ of player B to a point in $\lbrack 0, 1 \rbrack^{d(B)}$ where $d(B) = \battlefields\times(\troopsB + 1)$. Let \mainpolytopea{} and \mainpolytopeb{} denote the set of all possible mixed strategies of A and B in a Nash equilibrium. We define $\polytopea = \{\hat{x}\pipe\exists x \in \mainpolytopea, \stratmappinga(x)=\hat{x}\}$ and $\polytopeb = \{\hat{x}\pipe\exists x \in \mainpolytopeb, \stratmappingb(x)=\hat{x}\}$ to be the set of all Nash equilibrium strategies in the new space for A and B respectively.

Multi-Resource Colonel Blotto is a generalization of Colonel Blotto where each player may have different types of resources. In MRCB, there are $\battlefields{}$ battlefields and $\resources$ resource types. Players simultaneously distribute all their resources of all types over the battlefields. Let $\troopsA_i$ and $\troopsB_i$ denote the number of resources of type $i$ player A and B respectively have. A pure strategy of a player would be a partition of his resources over battlefields.  In other words, let $x_{i, j}$ and $y_{i, j}$ denote the amount of resources of type $j$, players A and B put in battlefield $i$ respectively. A vector $x = \langle x_{1, 1}, \ldots, x_{\battlefields, \resources} \rangle$ is a pure strategy of player A if for any $1 \leq j \leq \resources$, $\sum_{i=1}^{\battlefields} x_{i, j} = \troopsA_j$. Similarly a vector $y = \langle y_{1, 1}, \ldots, y_{\battlefields, \resources} \rangle$ is a pure strategy of player B if for any $1 \leq j \leq \resources$, $\sum_{i=1}^{\battlefields} y_{i, j} = \troopsB_j$. Let $\payoffa(x, y)$ and $\payoffb(x, y)$ denote the payoff of A and B and let $\payoffa_i(x, y)$ and $\payoffb_i(x, y)$ show their payoff over the $i$-th battlefield respectively. Note that 
$$\payoffa(x, y) = \sum_{i=1}^{\battlefields}\payoffa_i(x, y)$$ and 
$$\payoffb(x, y) = \sum_{i=1}^{\battlefields}\payoffb_i(x, y).$$ On the other hand since MRCB is a zero-sum game $\payoffa_i(x, y) = -\payoffb_i(x, y)$. Similar to Colonel Blotto we define $\mainpolytopeaMRCB{}$ and $\mainpolytopebMRCB{}$ to denote the set of all possible mixed strategies of A and B in a Nash equilibrium of MRCB and for any mixed strategy x for player A we define the mapping $\stratmappingaMRCB(x) = \hat{x} \in \lbrack 0, 1\rbrack^{d^\mathcal{M}(A)}$ where $d^\mathcal{M}(A) = \battlefields\times(\troopsA_1 + 1)\ldots\times(\troopsA_\resources + 1p)$ and by $\hat{x}_{i, j_1, \ldots, j_\resources}$ we mean the probability that in mixed strategy $x$, A puts $j_t$ amount of resource type $t$ in the $i$-th battlefield for any $t$ where $1 \leq t \leq \resources$.
We also define the same mapping for player B, $\stratmappingbMRCB(x) = \hat{x} \in \lbrack 0, 1\rbrack^{d^\mathcal{M}(B)}$ where $d^\mathcal{M}(B) = \battlefields\times(\troopsB_1 + 1)\ldots\times(\troopsB_\resources + 1)$. Lastly we define $\polytopeaMRCB = \{\hat{x}\pipe\exists x \in \mainpolytopeaMRCB, \stratmappingaMRCB(x)=\hat{x}\}$ and $\polytopebMRCB = \{\hat{x}\pipe\exists x \in \mainpolytopebMRCB, \stratmappingbMRCB(x)=\hat{x}\}$ to be the set of all Nash equilibrium strategies after the mapping.

\section{LP Formulation}
In this section we explain the LP formulation of Colonel Blotto proposed by \ahmadinejadduels~\cite{ahmadinejad2016duels} and show how it can be reformulated in a more efficient way. Recall that in the Colonel Blotto game, we have two players A and B, each in charge of a number of troops, namely $\troopsA$ and $\troopsB$ respectively. Moreover, the game is played on $\battlefields$ battlefields and every player's pure strategy is an allocation of his troops to the battlefields. Therefore, the number of pure strategies of the players is $\binom{\troopsA+\battlefields-1}{\battlefields-1}$ for player A and $\binom{\troopsB+\battlefields-1}{\battlefields-1}$ for player B.

The conventional approach to formulate the mixed strategies of a game is to represent every strategy by a vector of probabilities over the pure strategies. More precisely, a mixed strategy of a player is denoted by a vector of size equal to the number of his pure strategies, whose every element indicates the likelihood of taking a specific action in the game. The only constraint that this vector adheres to, is that the probabilities are non-negative and add up to 1. Such a formulation for Colonel Blotto requires a huge amount of space and computation, since the number of pure strategies of each player in this game is exponentially large.

To overcome this hardness, \ahmadinejadduels ~\cite{ahmadinejad2016duels} propose a more concise representation that doesn't suffer from the above problem. This is of course made possible by taking a significant hit on the simplicity of the description. They suggest, instead of indicating the probability of taking every action in the representation, we only keep track of the probabilities that a mixed strategy allocates a certain amount of troops to every battlefield. In other words, in the new representation, for every number of troops and any battlefield we have a real number, that denotes the probability of allocating that amount of troops to the battlefield. As a result, the length of the representation reduces from the number of pure strategies to $(\troopsA+1)\battlefields$ for player A and $(\troopsB+1)\battlefields$ for player B. This is indeed followed by a key observation: given the corresponding representations of the strategies of both players, one can determine the outcome of the game regardless of the actual strategies. In other words, the information stored in the representations of the strategies suffices to determine the outcome of the game.

In contrast to the conventional formulation, \ahmadinejadduels's representation is much more complicated and not well-understood. For example, in order to see if a representation corresponds to an actual strategy in the conventional formulation, we only need to verify that all of the probabilities are non-negative and their total sum is equal to 1. \ahmadinejadduels's representation, however, is not trivial to verify. Apart from the trivial constraints such as the probabilities add up to 1 or the number of allocated troops matches the number of the player's troops, there are many other constraints to be met. Moreover, it is not even proven whether such a representation can be verified with a polynomial number of linear constraints.

\ahmadinejadduels ~\cite{ahmadinejad2016duels} leverage the new representation to determine the equilibria of Colonel Blotto in polynomial time. Recall that in zero-sum games such as Colonel Blotto, the minmax strategies are the same as the maxmin strategies, and the game is in Nash Equilibrium if and only if both players play a maxmin strategy~\cite{nisan2007algorithmic}. Roughly speaking, the high-level idea of \ahmadinejadduels\enspace is to find a mixed strategy which performs the best against every strategy of the opponent. By the equivalence of the minmax and maxmin strategies then, one can show such a strategy is optimal for that player. Therefore, the naive formulation of the equilibria of Blotto is as follows:


\begin{alignat}{3}
	\label{lp1}
	\max  \quad & \gpayoff & \\
    \text{s.t. }\qquad
    & \hat{x} \text{ is a valid strategy for player A}\nonumber\\
    & \payoffa(\hat{x},\hat{y}) \geq \gpayoff \qquad \forall \hat{y}\nonumber
\end{alignat}

Note that, $\hat{x}$ is a vector of size $(\troopsA+1)\battlefields$ that represents a strategy of player A. Similarly, for every mixed strategy of player B, represented by $\hat{y}$, we have a constraint to ensure $\hat{x}$ achieves a payoff of at least $\gpayoff$ against $\hat{y}$. Notice that the only variables of the program are the probabilities encoded in vector $\hat{x}$. All other parameters are given as input, and hence appear as constant coefficients in the program. As declared, there are two types of constraints in Program \ref{lp1}. The first set of constraints ensures the validity of $\hat{x}$, and the second set of constraints makes sure $\hat{x}$ performs well against every strategy of player B. \ahmadinejadduels ~\cite{ahmadinejad2016duels} call the first set \textit{the membership constraints} and the second set \textit{the payoff constraints}. Throughout the paper Since for every mixed strategy, there exists a best response of the opponent which is pure, one can narrow dawn the payoff constraints to the pure strategies of player B.

The last observation of \ahmadinejadduels ~\cite{ahmadinejad2016duels} is to show both types of the constraints are convex in the sense that if two strategy profiles $\hat{x_1}$ and $\hat{x_2}$ meet either set of constraints, then $\frac{\hat{x_1}+\hat{x_2}}{2}$ is also a feasible solution for that set. This implies that Program \ref{lp1} is indeed a linear program that can be solved efficiently via the Ellipsoid method. However, \ahmadinejadduels's algorithm is practically impossible to run, as its computational complexity is $O(\troops^{12}\battlefields^4)$.

The reason \ahmadinejadduels's algorithm is so slow is that their LP has exponentially many constraints. Therefore, they need to run the Ellipsoid algorithm run solve the program. In addition to this, their separation oracle is itself a linear program with exponentially many constraints which is again very time consuming to run. However, a careful analysis shows that these exponentially many constraints are all necessary and none of them are redundant. This implies that the space of the LP as described by \ahmadinejadduels\enspace requires exponentially many constraints to formulate and hence we cannot hope for a better algorithm. A natural question that emerges, however, is whether we can change the space of the LP to solve it with a more efficient algorithm?

In this paper we answer the above question in the affirmative. There has been persistent effort to find efficient formulations for many classic polytopes. As an example, \textit{spanning trees} of a graph can be formulated via a linear program that has an exponential number of linear constraints. It is also not hard to show none of those constraints are redundant~\cite{edmonds1971matroids}. However, \martinusing~\cite{martin1991using} showed that the same polytope can be formulated with $O(n^3)$ linear constraints where $n$ is the number of nodes of the graph. Other examples are \textit{the permutahedron} ~\cite{goemans2015smallest}, \textit{the parity polytope}~\cite{rothvoss2013some}, and \textit{the matching polytope}~\cite{rothvoss2014matching}. In these examples, a substantial decrease in the number of constraints of the linear formulation of a problem is made possible by adding auxiliary variables to the program. Our work follows the same guideline to formulate the equilibria of Blotto with a small number of constraints.

In Section \ref{lp}, we explain how to formulate the membership and payoff limitations with a small number of linear constraints. Finally in Section \ref{lowerbound}, we show that our formulation is near optimal. In other words, we show that any linear program that formulates the equilibria of Blotto, has to have as many linear constraints as the number of constraints in our formulation within a constant factor. We show this via \textit{rectangle covering lower bound} proposed by \yannikakisexpressing~\cite{yannakakis1988expressing}

\section{Main Results}\label{lp}
In this section we give a linear program to find a maxmin strategy for a player in an instance of Colonel Blotto with polynomially many constraints and variables. To do this, we describe the same representation described by \ahmadinejadduels{}'s \cite{ahmadinejad2016duels} LP in another dimension, to reduce the number of constraints. This gives us a much better running time, since they had to use ellipsoid method to find a solution for their LP in polynomial time, which makes their algorithm very slow and impractical. We define a \textit{layered graph} for each player and show any mixed strategy of a player can be mapped to a particular flow in his layered graph. Our LP includes two set of constraints, \textit{membership constraints} and \textit{payoff constraints}. Membership constraints guarantee we find a valid strategy and payoff constraints guarantee this strategy minimizes the maximum benefit of the other player.

\begin{definition}[Layered Graph]\label{layeredGraph}
For an instance of a Blotto game with \battlefields{} battlefields, we define a layered graph for a player with $\troops$ troops as follows: The layered graph has $\battlefields{}+1$ layers and $\troops{}+1$ vertices in each layer. Let $v_{i,j}$ denote the $j$'th  vertex in the $i$'th layer ( $0\leq i\leq \battlefields{}$ and $0\leq j\leq \troops{}$). For any $1 \leq i \leq \battlefields{}$ there exists a directed edge from $v_{i-1, j}$ to $v_{i, l}$ iff $0 \leq j\leq l \leq \troops{}$. We denote the layered graph of player A and B by \layeredgraph{A} and \layeredgraph{B} respectively.
\end{definition}
Based on the definition of layered graph we define \textit{canonical paths} as follows:
\begin{definition}[Canonical Path]\label{canonicalPath}
A canonical path is a directed path in a layered graph that starts from $v_{0,0}$ and ends at $v_{\battlefields, \troops}$.
\end{definition}
\begin{figure}[hbt]
	\centering
	\includegraphics[scale=0.7]{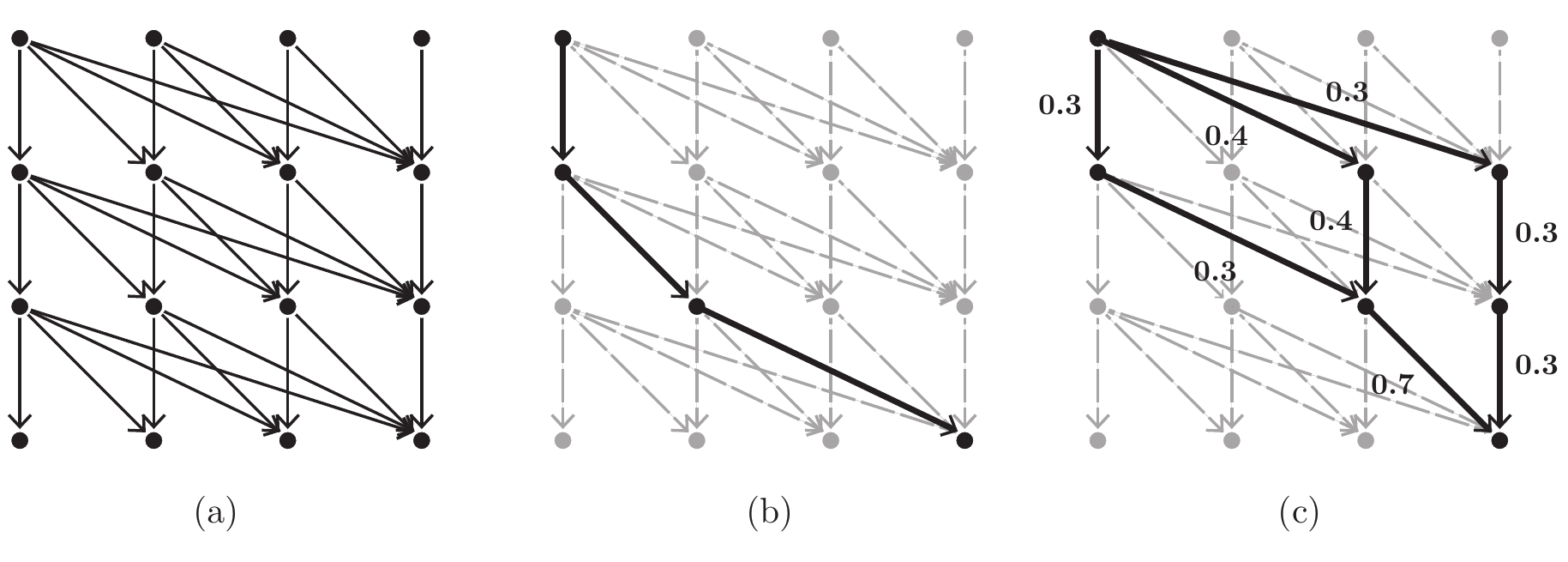}
	\caption{Figure (a) shows a layered graph for a player with $3$ troops playing over $3$ battlefields. In Figure (b) a canonical path corresponding to a pure strategy where the player puts no troops on the first battlefield, 1 troop on the second one and two troops on the 3rd one is shown. Figure (c) shows a flow of size $1$, that is a representation of a mixed strategy consisting of three pure strategies with probabilities $0.3$, $0.4$ and $0.3$.}
		\label{fig:layGraphCanPath}
\end{figure}
Figure~\ref{fig:layGraphCanPath} shows a layered graph and a canonical path. Now, we give a one-to-one mapping between canonical paths and pure strategies.

\begin{lemma}\label{pureLemma}
	Each pure strategy for a player is equivalent to exactly one canonical path in the layered graph of him and vice versa.
\end{lemma}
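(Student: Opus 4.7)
The plan is to construct an explicit bijection between pure strategies and canonical paths, verify that both directions of the map are well-defined, and check they are mutual inverses.

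First I would define the forward map $\Phi$ from pure strategies to canonical paths. A pure strategy is a tuple $(a_1, \ldots, a_\battlefields)$ with $a_i \geq 0$ and $\sum_{i=1}^{\battlefields} a_i = \troops$, where $a_i$ is the number of troops placed on battlefield $i$. Given such a tuple, I would let $j_i = a_1 + a_2 + \cdots + a_i$ denote the partial sums (with $j_0 = 0$), and set $\Phi(a_1,\ldots,a_\battlefields)$ to be the sequence of vertices $v_{0,j_0} \to v_{1,j_1} \to \cdots \to v_{\battlefields, j_\battlefields}$. I would then check the three conditions required by Definition~\ref{canonicalPath} and Definition~\ref{layeredGraph}: the path begins at $v_{0,0}$ since $j_0=0$ (empty sum), it ends at $v_{\battlefields,\troops}$ since $j_\battlefields = \sum_i a_i = \troops$, and every consecutive pair $(v_{i-1,j_{i-1}}, v_{i,j_i})$ is an edge of $\layeredgraph{}$ because $a_i \geq 0$ implies $j_{i-1} \leq j_i \leq \troops$.

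Next I would define the reverse map $\Psi$ from canonical paths to pure strategies. A canonical path, by the edge condition in Definition~\ref{layeredGraph}, must visit exactly one vertex $v_{i,j_i}$ in each layer, with $0 = j_0 \leq j_1 \leq \cdots \leq j_\battlefields = \troops$. I would set $\Psi$ of this path to be the tuple $(a_1,\ldots,a_\battlefields)$ with $a_i = j_i - j_{i-1}$. This is a valid pure strategy since $a_i \geq 0$ follows from $j_{i-1} \leq j_i$, and $\sum_{i=1}^{\battlefields} a_i = j_\battlefields - j_0 = \troops$ is a telescoping sum.

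Finally I would verify $\Psi \circ \Phi = \mathrm{id}$ and $\Phi \circ \Psi = \mathrm{id}$. For the first, starting from $(a_1,\ldots,a_\battlefields)$, forming partial sums, and then taking differences recovers the original tuple. For the second, starting from a path with indices $(j_0,\ldots,j_\battlefields)$, taking differences and then re-summing reproduces the same indices. Since both maps are well-defined and mutually inverse, they establish the desired bijection.

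There is no real obstacle here beyond bookkeeping; the content of the lemma is essentially the standard correspondence between non-negative integer compositions of $\troops$ into $\battlefields$ parts and non-decreasing integer sequences from $0$ to $\troops$ of length $\battlefields+1$. The one subtle point to be careful about is that a canonical path is uniquely determined by the sequence of vertices it visits (no two distinct edges share the same endpoints in the layered graph), so specifying the indices $(j_0,\ldots,j_\battlefields)$ truly specifies a unique path.
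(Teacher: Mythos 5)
Your proof is correct and takes essentially the same approach as the paper: the forward direction via partial sums $j_i = a_1 + \cdots + a_i$ and the reverse direction via differences $a_i = j_i - j_{i-1}$ are exactly the paper's construction. Your explicit check that the two maps are mutual inverses is a small refinement the paper leaves implicit, and it is what actually justifies the word ``exactly'' in the statement.
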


\begin{proof}
	Since the edges in the layered graph exist only between two consecutive layers, each canonical path contains exactly $\battlefields$ edges. Let $p$ be an arbitrary canonical path in the layered graph of a player with $\troops{}$ troops. In the equivalent pure strategy put $l_i$ troops in the battlefield $i$ if $p$ contains the edge between $v_{i-1, j}$ and $v_{i, j+l_i}$ for some $j$. By definition of the layered graph, we have  $l_i\geq 0$. Also since $p$ starts from $v_{0,0}$ and ends in $v_{\battlefields, \troops}$ we have $\sum_{i=0}^{\battlefields} l_i= \troops $. Therefore this strategy is a valid pure strategy.
			
	On the other hand, let $s$ be a valid pure strategy and let $s_i$ denote the total number of troops in battlefields $1$ to $i$ in strategy $s$. We claim the set of edges between $v_{i-1, s_{i-1}}$ and $v_{i, s_i}$ for $1 \leq i \leq \battlefields$ is a canonical path. Note that $s_0 = 0$ and $s_{\battlefields}=\troops$ also the endpoint of any of such edges is the starting point of the edge chosen from the next layer, so we have constructed a valid canonical path.
\end{proof}

So far it is clear how layered graphs are related to pure strategies using canonical paths. Now we explain the relation between mixed strategies and flows of size $1$ where $v_{0, 0}$ is the source and $v_{\battlefields, \troops}$ is the sink. One approach to formulate the mixed strategies of a game is to represent every strategy by a vector of probabilities over the pure strategies. Since based on Lemma \ref{pureLemma} each pure strategy is equivalent to a canonical path in the layered graph; for any pure strategy $s$ with probability $P(s)$ in a mixed strategy we assign a flow of size $P(s)$ to the corresponding canonical paths of $s$ in the layered graph. All these paths begin and end in $v_{0,0}$ and $v_{\battlefields, \troops}$ respectively. Therefore since $\sum P(s) = 1$ for all pure strategies of a mixed strategy, the size of the corresponding flow would be exactly $1$.
\begin{corollary}\label{mixtoflowLemma}
	For any mixed strategy of a player with \troops{} troops there is exactly one corresponding flow from vertex $v_{0,0}$ to $v_{\battlefields,\troops}$ in the layered graph of that player.
\end{corollary}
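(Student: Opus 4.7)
The plan is to construct the flow explicitly from the probability distribution over pure strategies, verify it satisfies all flow properties, and argue its uniqueness from the construction itself. Given a mixed strategy represented as a probability vector $P$ over pure strategies, Lemma \ref{pureLemma} gives a bijection between pure strategies and canonical paths in the layered graph. I would define the flow $F$ edge-wise by setting, for each directed edge $e$ of \layeredgraph{}, the flow value $F(e) = \sum_{s\,:\,e \in p_s} P(s)$, where $p_s$ denotes the canonical path corresponding to pure strategy $s$.

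Next I would verify that $F$ is a valid $v_{0,0}$-to-$v_{\battlefields,\troops}$ flow of size $1$. Non-negativity is immediate since the $P(s)$ are probabilities. For conservation at any internal vertex $v_{i,j}$ (with $0 < i < \battlefields$), I would observe that a canonical path $p_s$ passing through $v_{i,j}$ uses exactly one incoming edge and exactly one outgoing edge at this vertex, while a path not passing through $v_{i,j}$ contributes $0$ to both sides; summing $P(s)$ over the paths through $v_{i,j}$ shows that inflow equals outflow. For the flow value, every canonical path begins at $v_{0,0}$ and ends at $v_{\battlefields,\troops}$, so the net outflow from $v_{0,0}$ equals $\sum_s P(s) = 1$.

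For uniqueness, the construction above is deterministic: the assignment $e \mapsto F(e)$ is completely specified by $P$, so the mapping from mixed strategies to flows is a function, giving \emph{at most} one corresponding flow. Combined with existence, this yields exactly one. I expect the main obstacle, if any, to be articulating the conservation argument cleanly, since it relies on the structural fact from Definition \ref{layeredGraph} that edges only go between consecutive layers and that canonical paths intersect each layer in exactly one vertex. This structural observation follows immediately from Lemma \ref{pureLemma}, so the corollary is largely a routine consequence of the lemma and a standard path-decomposition argument for flows.
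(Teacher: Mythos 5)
Your proposal is correct and takes essentially the same approach as the paper: the paper also constructs the flow by placing $P(s)$ units of flow on the canonical path of each pure strategy $s$ (via Lemma \ref{pureLemma}) and concludes the total flow has size $1$ since $\sum_s P(s) = 1$. Your write-up is simply more explicit than the paper's, spelling out the edge-wise definition, conservation at internal vertices, and the uniqueness-by-construction argument, all of which the paper leaves implicit.
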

Note that although we map any given mixed strategy to a flow of size $1$ in the layered graph, this is not a one-to-one mapping because several mixed strategies could be mapped to the same flow. However in the following lemma we show that this mapping is surjective.
\begin{lemma}\label{flowtomixLemma}
	For any flow of size $1$ from $v_{0,0}$ to $v_{\battlefields, \troops}$ in the layered graph of a player with \troops{} troops, there is at least one mixed strategy of that player with a polynomial size support that is mapped to this flow.
\end{lemma}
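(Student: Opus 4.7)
The plan is to prove this constructively via a standard flow decomposition argument. Given a flow $F$ of size $1$ from $v_{0,0}$ to $v_{\battlefields,\troops}$ in the layered graph $\layeredgraph{}$, I will decompose $F$ into a convex combination of canonical paths, and then invoke Lemma~\ref{pureLemma} to translate each path into a pure strategy, obtaining the desired mixed strategy.

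First I would describe the decomposition procedure. Maintain a residual flow $F'$ initialized to $F$. While $F'$ is nonzero, pick any canonical path $p$ all of whose edges carry strictly positive flow in $F'$; such a path exists because $\layeredgraph{}$ is a layered DAG, and starting from $v_{0,0}$ one can greedily follow any outgoing edge with positive residual flow until reaching $v_{\battlefields,\troops}$ (conservation of flow guarantees that once we are at an intermediate vertex with positive incoming flow, at least one outgoing edge carries positive flow, and by the layered structure we never get stuck). Let $\alpha_p$ be the bottleneck value $\min_{e \in p} F'(e) > 0$. Subtract $\alpha_p$ from $F'(e)$ on every edge $e \in p$, and record the pair $(p,\alpha_p)$ in our list of paths.

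Next I would argue termination and polynomial support size. Each iteration saturates at least one edge, zeroing out its residual value, so the procedure terminates after at most $|E(\layeredgraph{})|$ iterations; since $|E(\layeredgraph{})| = O(\battlefields \troops^2)$, the collection of paths produced has polynomial size. Moreover, since the total flow leaving $v_{0,0}$ is $1$ and each iteration reduces this outflow by exactly $\alpha_p$, we have $\sum_p \alpha_p = 1$, so the coefficients $\{\alpha_p\}$ form a valid probability distribution over canonical paths.

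Finally I would translate back to strategies. By Lemma~\ref{pureLemma}, each canonical path $p$ in our list corresponds to a unique pure strategy $s_p$. Define the mixed strategy that plays $s_p$ with probability $\alpha_p$; its support is polynomial. To verify this mixed strategy maps to the original flow $F$ under the correspondence of Corollary~\ref{mixtoflowLemma}, note that the flow assigned by the mixed strategy to an edge $e$ equals the sum of probabilities of pure strategies whose canonical path uses $e$, which is exactly $\sum_{p \ni e}\alpha_p = F(e)$ by construction. The main delicate point, and the only nontrivial step, is verifying at each iteration that a canonical path with strictly positive residual flow really does exist; this follows from flow conservation at every internal layer together with the layered DAG structure, which precludes cycles and guarantees forward progress toward $v_{\battlefields,\troops}$.
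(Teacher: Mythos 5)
Your proposal is correct and follows essentially the same route as the paper: both decompose the unit flow into polynomially many path flows by repeatedly extracting a source-to-sink path and subtracting its bottleneck value, bound the number of iterations by the number of edges, and then read off the mixed strategy by assigning each canonical path's flow value as the probability of the corresponding pure strategy from Lemma~\ref{pureLemma}. Your write-up is somewhat more careful than the paper's (you explicitly justify, via flow conservation and the DAG structure, that an augmenting path always exists, and you verify that the coefficients sum to $1$ and that the resulting strategy maps back to $F$), but the underlying argument is identical.
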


\begin{proof}
First, note that we can decompose any given flow to polynomially many flow paths from source to sink \cite{cormen2009introduction}. A flow path is a flow over only one path from source to sink. One algorithm to find such decomposition finds a path $p$ from source to sink in each step and subtracts the minimum passing flow through its edges from every edge in $p$. The steps are repeated until there is no flow from source to sink. Since the flow passing through at least one edge becomes $0$ at each step, the total number of these paths will not exceed the total number of edges in the graph. This means the number of flow paths in the decomposition will be polynomial.

Now, given a flow of size $1$ from $v_{0, 0}$ to $v_{\battlefields, \troops}$, we can basically decompose it to polynomially many flow paths using the aforementioned algorithm. The paths over which these flow paths are defined correspond to pure strategies and the amount of flow passing through each, corresponds to its probability in the mixed strategy.
\end{proof}

Using the flow representation for mixed strategies and the shown properties for it, we give the first LP with polynomially many constraints and variables to find a maxmin strategy for any player in an instance of Colonel Blotto. Our LP consists of two set of constraints, the first set (membership constraints) ensures we have a valid flow of size $1$. This means we will be able to map the solution to a valid mixed strategy. The second set of constraints are needed to ensure the minimum payoff of the player we are finding the maxmin strategy for, is at least \gpayoff{}. Now, by maximizing \gpayoff{} we will get a maxmin strategy. In the following theorem we prove $\polytopea$ could be formulated with polynomially many constraints and variables. Note that one can swap $\troopsA$ and $\troopsB$ and use the same LP to formulate $\polytopeb{}$.

\begin{theorem}
	In an instance of Colonel Blotto, with \battlefields{} battlefields and at most \troops{} troops for each player, \polytopea{} could be formulated with \orderlowerbound{} constraints and \variables{} variables.
\end{theorem}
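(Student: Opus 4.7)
The plan is to construct the LP directly over flow variables on the layered graph $\layeredgraph{A}$ (using the correspondence established by Corollary~\ref{mixtoflowLemma} and Lemma~\ref{flowtomixLemma}), and to compress the exponentially many payoff constraints through LP duality on a shortest-path LP in $\layeredgraph{B}$.

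For the membership constraints, I would introduce a flow variable $\flow_{i,j,l}$ for every edge $(v_{i-1,j}, v_{i,l})$ of $\layeredgraph{A}$, and impose the standard flow-conservation equalities at each internal vertex, edge non-negativity, and one unit of flow leaving $v_{0,0}$. Since $\layeredgraph{A}$ has $\Theta(\troops\battlefields)$ vertices and \orderlowerbound{} edges, this block contributes \orderlowerbound{} variables and constraints. I then define $\hat{x}_{i,j} = \sum_{k} \flow_{i,k,k+j}$ so that $\hat{x}$ is exactly the strategy marginal (adding $\Theta(\troops\battlefields)$ defining equalities); by Lemma~\ref{flowtomixLemma}, any feasible $\flow$ corresponds to an actual mixed strategy whose $\stratmappinga$-image equals $\hat{x}$.

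For the payoff constraints, the key observation is that for any fixed $\hat{x}$, player B has a pure best response, and by the analogue of Lemma~\ref{pureLemma} this best response is a canonical path in $\layeredgraph{B}$. Assign to each edge $(v_{i-1,k}, v_{i,k+j})$ of $\layeredgraph{B}$ the weight $\pr{i}{j}$, equal to A's expected payoff on battlefield $i$ when B places $j$ troops there. Crucially, $\pr{i}{j}$ is a linear function of the marginals $\hat{x}_{i,\cdot}$ (a weighted sum of $\hat{x}_{i,j'}$ over $j' > j$ plus a tie-breaking term on $\hat{x}_{i,j}$), so the edge weights are linear in the LP variables. The condition $\payoffa(\hat{x},y) \geq \gpayoff$ for every pure strategy $y$ of B is then equivalent to the shortest canonical path in $\layeredgraph{B}$ under weights $\pr{i}{j}$ having length at least $\gpayoff$. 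Because $\layeredgraph{B}$ is a single-source single-sink DAG, its shortest-path LP is totally unimodular, so strong duality lets me replace this condition by node potentials $\lpvard{i}{k}$ satisfying $\lpvard{i}{k+j} - \lpvard{i-1}{k} \leq \pr{i}{j}$ for every edge of $\layeredgraph{B}$, together with $\lpvard{\battlefields}{\troops} - \lpvard{0}{0} \geq \gpayoff$. These amount to \orderlowerbound{} linear inequalities in the variables $\hat{x}$, $\lpvard{\cdot}{\cdot}$, and $\gpayoff$.

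Maximizing $\gpayoff$ over the combined system then produces a maxmin strategy of A with \variables{} variables and \orderlowerbound{} constraints, matching the claimed bounds. The step I expect to be the main obstacle is the duality argument: I must argue carefully that the shortest-path LP on $\layeredgraph{B}$ is integral (so its optimum is genuinely a pure best response of B rather than a fractional combination) and that $\pr{i}{j}$ is set up so as to be a genuine linear function of $\hat{x}$ under the adopted tie-breaking rule. Integrality follows from total unimodularity of the node–arc incidence matrix of a DAG, and the tie-breaking bookkeeping is routine but must be stated explicitly so that the payoff inequalities encoded by the dual constraints match $\payoffa$ exactly.
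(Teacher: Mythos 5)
Your proposal is correct and takes essentially the same route as the paper's proof: identical flow formulation on \layeredgraph{A} for the membership constraints, the same marginals, and the same device of introducing one node variable per vertex of \layeredgraph{B} with one inequality per edge to bound B's best response, yielding the claimed \orderlowerbound{} constraints and \variables{} variables. The only cosmetic difference is that you phrase the second block as shortest-path duality on A's payoffs (potentials with $\lpvard{\battlefields}{\troops} - \lpvard{0}{0} \geq \gpayoff$) whereas the paper uses longest-path dynamic-programming constraints on B's payoffs (requiring $\lpvard{\battlefields}{\troopsB} \leq -\gpayoff$); these coincide under zero-sum negation, and your appeal to total unimodularity is unnecessary since the telescoping argument on the potentials already handles every canonical path directly.
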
\label{lpMRCB}
\begin{proof}
The high-level representation of our LP is as follows:
\begin{alignat}{3}\label{highlevellp}
	\max  \quad & \gpayoff & \\
    \text{s.t. }\qquad
    & \hat{x} \text{ is a valid strategy for player A}\nonumber\\
    & \payoffb(\hat{x},\hat{y}) \leq -\gpayoff \qquad \forall \hat{y}.\nonumber
\end{alignat}
The strategies $\hat{x}$ and $\hat{y}$ are represented using a flow of size $1$ in the layered graph of player A and B respectively. In Lemma \ref{flowtomixLemma} we proved any valid flow representation could be mapped to a mixed strategy.

To ensure we a have a valid flow of size $1$ from $v_{0, 0}$ to $v_{\battlefields, \troopsA}$ in \layeredgraph{A} (recall that \layeredgraph{A} denotes the layered graph of player A), we use the classic LP representation of flow \cite{bazaraa2011linear}. That is, not having any negative flow and the total incoming flow of each vertex must be equal to its total outgoing flow except for the source and the sink. We denote the amount of flow passing through the edge from $v_{k, i}$ to $v_{k+1, j}$ by variable $\flow_{k, i, j}$. The exact membership constraints are shown in Linear Program~\ref{fig:detailedlp}-a.

\begin{figure}
	\centering
	\includegraphics[scale=0.8]{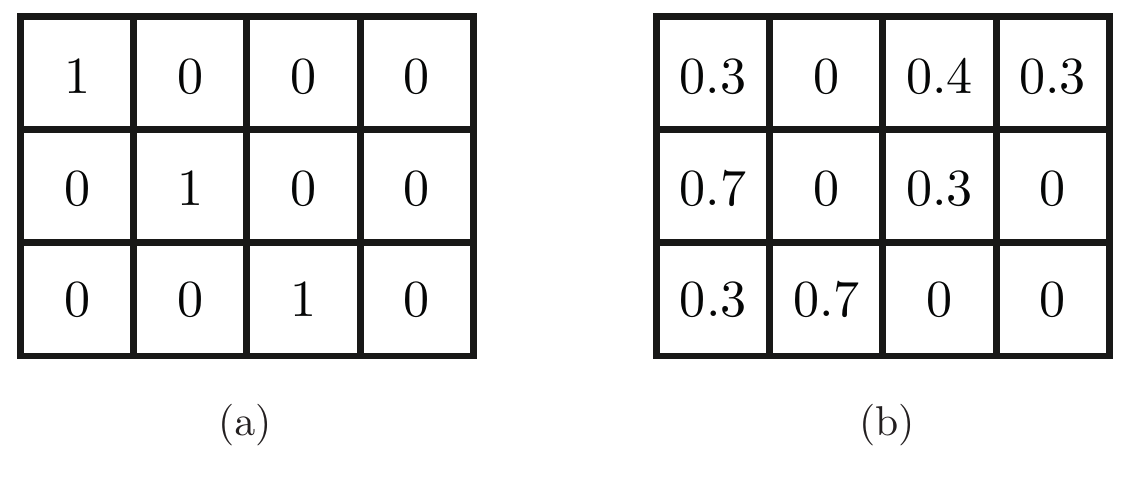}
	\caption{Figure (a) shows $\pr{k}{i}$ for the pure strategy specified in Figure~\ref{fig:layGraphCanPath}-b and Figure (b) shows $\pr{k}{i}$ for the mixed strategy specified in Figure~\ref{fig:layGraphCanPath}-c. The rows correspond to battlefields and the columns correspond to the number of troops.}
	\label{fig:prob}
\end{figure}
On the other hand, we maximize the guaranteed payoff of player A, by bounding the maximum possible payoff of player B. To do this, first note that for any given strategy of player A, there exists a pure strategy for player B, that maximizes his payoff. Let $\pr{k}{j}$ denote the probability that player A puts $j$ troops in the $k$-th battlefield. Figure~\ref{fig:prob} shows the value of $\pr{k}{j}$ for the illustrated examples in Figure~\ref{fig:layGraphCanPath}. We can compute these probabilities using the variables defined in the previous constraints, as follows:
\begin{equation}
	\pr{k}{i}= \sum_{i=0}^{\troopsA-j} \flow_{k,i,i+j}
\end{equation}
By having these probabilities we can compute the expected payoff that player B gets over battlefield $k$, if he puts $i$ troops in it. Moreover consider a given canonical path $p$ in \layeredgraph{B} and let $s_p$ be the pure strategy of player B, equivalent to $p$. We use \weight{k}{i} to denote the expected payoff of player B over battlefield $k$ by putting $i$ troops in it. This means the expected payoff of playing strategy $s_p$ would be 
$\sum \weight{k}{j-i}$
for any $k$, $i$ and $j$ such that there exists an edge from $v_{k, i}$ to $v_{k+1, j}$ in $p$.  It is possible to compute \weight{k}{i} using the following equation:
\begin{equation}
	\weight{k}{i} = \sum\limits_{l=0}^{\troopsA} \pr{k}{l} \times \payoffb_k(i, l) \qquad \forall k:  1\leq t \leq \battlefields
\end{equation}
Note that both equations to compute \pr{k}{i} and \weight{k}{i} are linear and could be computed in our LP.

Assume \weight{k}{i} is the weight of the edge from $v_{k, j}$ to $v_{k+1, i+j}$ in \layeredgraph{B}. Given the probability distribution of player A (which we denoted by \pr{k}{i}), the problem of finding the pure strategy of B with the maximum possible expected payoff, would be equivalent to finding a path from $v_{0, 0}$ to $v_{\battlefields, \troopsB}$ with the maximum weight. 

To find the path with the maximum weight from $v_{0, 0}$ to $v_{\battlefields, \troopsB}$, we define an LP variable \lpvard{k}{i} where its value is equal to the weight of the maximum weighted path from $v_{0, 0}$ to $v_{k, i}$ and we update it using a simple dynamic programming like constraint:
$$\lpvard{k}{i} \geq \lpvard{k-1}{j} + \weight{k-1}{i-j} \qquad \forall i, j: 0 \leq j \leq i \leq \troopsB
$$
 The maximum weighted path from $v_{0, 0}$ to $v_{\battlefields, \troopsB}$ would be equal to the value of $\lpvard{\battlefields}{\troopsB}$. The detailed constraints are shown in Linear Program~\ref{fig:detailedlp}-b.

\begin{lp}
\scalebox{1}{\parbox{1\hsize}{%
\begin{empheq}{alignat*=3}
	\\
	& \hspace{0.7cm} \max \hspace{0.4cm} u\\[0.2cm]
	& \text{(a) }\begin{dcases}
	    \Sigma_{i=0}^{l} \flow_{k,i,l} = \Sigma_{j=l}^{\troopsA} \flow_{k+1,l,j}
			\hspace{0.88cm} \forall k, l:1\leq k \leq \battlefields{}-1, 0 \leq l \leq \troopsA\\
		\flow_{k,i,j} \geq 0 \hspace{3.27cm} \forall k, i, j: 1\leq k \leq \battlefields{}, 0 \leq i\leq j \leq \troopsA \\
		\Sigma_{j=l}^{\troopsA} \flow_{1,l,j} = 0  \hspace{2.6cm} \forall l: 0<l\leq \troopsA \\ 
		\Sigma_{j=0}^{\troopsA} \flow_{1,0,j} = 1 \\
		\Sigma_{j=0}^{\troopsA} \flow_{\battlefields{},j,\troopsA} = 1
	\end{dcases}
	\\[0.3cm]
	& \text{(b) }\begin{dcases}
		\pr{k}{i}= \Sigma_{i=0}^{\troopsA-j} \flow_{k,i,i+j} \hspace{1.7cm} \forall k,j: 1\leq k \leq \battlefields, 0\leq j \leq \troopsA \\
		\weight{k}{i} = \Sigma_{l=0}^{\troopsA} \pr{k}{l} \times \payoffb_k(i, l) \hspace{0.65cm} \forall k, i:  1\leq k \leq \battlefields, 0\leq i \leq \troopsB\\
		\lpvard{0}{i} = 0 \hspace{3.4cm} \forall i: 0 \leq i \leq \troopsB\\
		\lpvard{k}{i} \geq \lpvard{k-1}{j} + \weight{k-1}{i-j} \hspace{0.7cm} \forall i, j:  0 \leq j \leq i \leq \troopsB \\
		\lpvard{\battlefields}{\troopsB} \leq -\gpayoff
	\end{dcases}
	\\
\end{empheq}
}}
\caption{The detailed linear program to find a maxmin strategy for player A. The first set of constraints denoted by (a) ensure we get a valid flow of size $1$ from $v_{0, 0}$ to $v_{\battlefields, \troopsA}$ in the layered graph of player A (a mixed strategy of him) and the second set of constraints denoted by (b) ensure the guaranteed payoff of player A is at least $\gpayoff$. The value of variable $\flow_{k, i, j}$ is the amount of flow passing through the edge from $v_{k, i}$ to $v_{k+1, j}$ for any valid $k$, $i$ and $j$. Variable \lpvard{i}{j} is the size of the maximum weighted path from $v_{0, 0}$ to $v_{i, j}$ in the layered graph of player B, therefore \lpvard{\battlefields}{\troopsB} denotes the maximum payoff of B and $\gpayoff$ is the guaranteed payoff of player A. For an informal explanation of the LP see the text.}
\label{fig:detailedlp}
\end{lp}

Note that the number of variables we use in Linear Program~\ref{fig:detailedlp} is as follows:
\begin{itemize}
	\item Variables of type $\flow_{k, i, l}$: $\Theta(\troopsA^2\battlefields)$.
	\item Variables of type \pr{k}{i} : $\Theta(\troopsA\battlefields)$.
	\item Variables of type \weight{k}{i} : $\Theta(\troopsB\battlefields)$.
	\item Variables of type \lpvard{k}{i} : $\Theta(\troopsB\battlefields)$.
\end{itemize}
Therefore the total number of variables is \orderlowerbound{}. Also note that the number of non-negativity constraints ($\flow_{k, i, j} \geq 0$) is more than any other constraints and is \orderlowerbound{}, therefore the total number of constraints is also \orderlowerbound{}.
\end{proof}

To obtain a mixed strategy for player A, it suffices to run Linear Program~\ref{fig:detailedlp} and find a mixed strategy of A that is mapped to the flow it finds. Note that based on Lemma~\ref{flowtomixLemma} such mixed strategy always exists. Afterwards we do the same for player B by simply substituting \troopsA{} and \troopsB{} in the LP.

\section{Lower Bound}\label{lowerbound}
A classic approach to reduce the number of LP constraints needed to describe a polytope is to do it in a higher dimension. More precisely, adding extra variables might reduce the number of facets of a polytope. This means a complex polytope may be much simpler in a higher dimension. This is exactly what we did in Section \ref{lp} to improve \ahmadinejadduels's algorithm. In this section we prove that any LP formulation that describes solutions of a Blotto game requires at least \orderlowerbound{} constraints, no matter what the dimension is. This proves the given LP in Section \ref{lp} is tight up to constant factors.

The minimum needed number of constraints in any formulation of a polytope $P$ is called \textit{extension complexity} of $P$, denoted by $\xc{P}$. It is not usually easy to prove a lower bound directly on the extension complexity, because all possible formulations of the polytope must be considered. A very useful technique given by \yannikakisexpressing~\cite{yannakakis1988expressing} is to  prove a lower bound on the \textit{positive rank} of the \textit{slack matrix} of $P$ which is proven to be equal to $\xc{P}$. Note that you could define the slack matrix over any formulation of $P$ and its positive rank would be equal to $\xc{P}$, which means you do not have to worry about all possible formulations. To prove this lower bound we use a method called \textit{rectangle covering lower bound}, already given in \yannikakisexpressing's paper. We will now formally define some of the concepts we used:

\begin{definition}[Extension Complexity]
	Extension complexity of a polytope $P$, denoted by $\xc{P}$ is the smallest number of facets of any other higher dimensional polytope $Q$ that has a linear projection function $\pi$ with $\pi(Q)=P$.
\end{definition}
The next concept we need is slack matrix, which is a matrix of non-negative real values where its columns correspond to vertices of $P$ and its rows correspond to its facets. The value of each element of slack matrix is basically the distance of the vertex corresponding to its column from the facet corresponding to its row. More formally:
\begin{definition}[Slack Matrix]
	Let $\{v_1, \ldots, v_v\}$ be the set of vertices of $P$ and let $\{x \in \mathbb{R}^n | Ax \leq b \}$ be the description of it. The slack matrix of $P$ denoted by $\slack{P}$, is defined by $\slack{P}_{ij} = b_i - A_iv_j$.
\end{definition}
 Also, the non-negative rank of a matrix $S$ is the minimum number $m$ such that $S$ could be factored into two non-negative matrices $F$ and $V$ with dimensions $f\times m$ and $m \times v$.
\begin{definition}[Non-negative Rank]
We define the non-negative rank of a matrix $S$ with $f$ rows and $v$ columns, denoted by $\rk(S)$ to be:  
 \begin{equation}
	\rk (S) = \min\{m | \exists F \in \mathbb{R}^{f \times m}_{\geq 0}, V \in \mathbb{R}^{m \times v}_{\geq 0} : S = FV\}
\end{equation}
\end{definition}
\yannikakisexpressing~\cite{yannakakis1988expressing} proved that $\xc{P} = \rk(\slack{P})$. Therefore instead of proving a lower bound on the extension complexity of $P$, it only suffices to prove a lower bound on the positive rank of the corresponding slack matrix. As mentioned before, to do so, we will use the rectangle covering lower bound. A rectangle covering for a given non-negative matrix $S$ is the minimum number of rectangles needed, to cover all the positive elements of $S$ and none of its zeros (Figure~\ref{fig:rectangleCovering}), formally defined as follows:
\begin{definition}[Rectangle Covering]
	Suppose $r=\rk(S)$ and let $S=UV$ be a factorization of $S$ by non-negative matrices $U$ and $V$. Let $\operatorname{supp}(S)$ denote the set of all the positive values of $S$. Then
	\begin{equation*}
	\operatorname{supp}(S) = \bigcup_{l=1}^{r}(\{i|U_{il}>0\} \times \{j|V_{lj}>0\})
	\end{equation*}
	is a rectangle covering of $S$ with $r$ rectangles.
\end{definition}
\begin{figure}[hbt]
	\centering
	\includegraphics[scale=0.8]{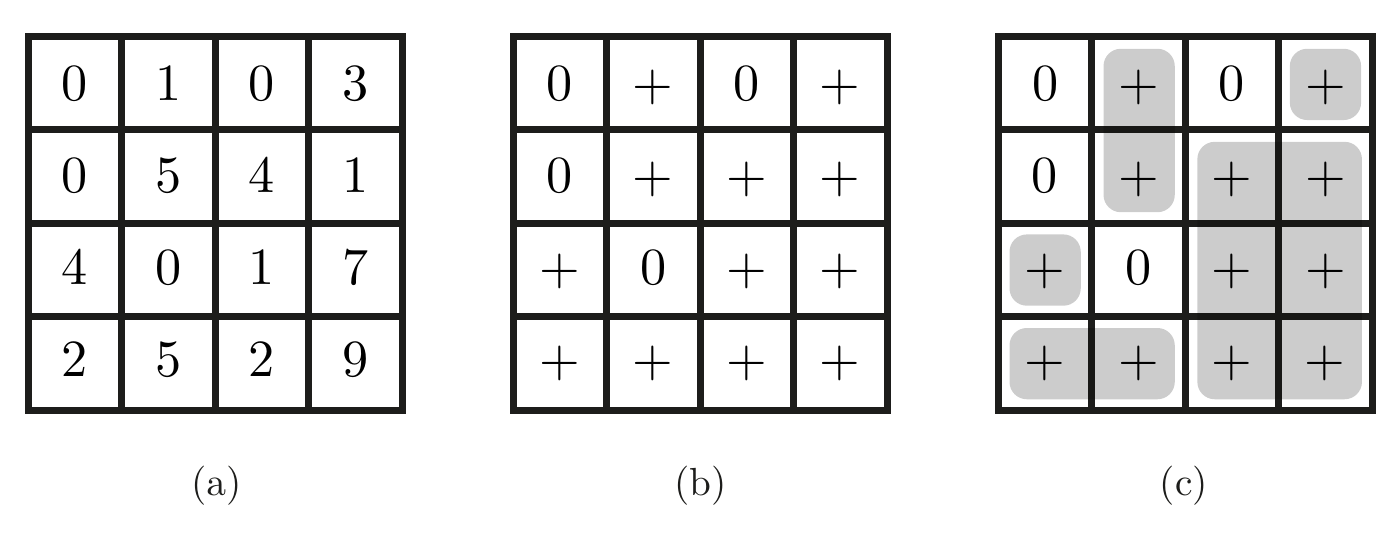}
	\caption{Figure (a) shows a sample matrix, in Figure (b) we change any non-negative value in the matrix of Figure (a) to ``$+$'' and in Figure (c) all these non-negative elements are covered by the minimum possible number of rectangles. Note that the non-negative rank of the matrix in Figure (a) could not be less than $5$ (the number of rectangles).}
	\label{fig:rectangleCovering}
\end{figure}
\yannikakisexpressing{} showed that the number of rectangles in a minimum rectangle covering could never be greater than $\rk(S)$, using a very simple proof. This means any lower bound of it, is also a lower bound of the actual $\rk(S)$. This is the technique we use in the proof of the following lemma, which is used later to prove the main theorem:

\begin{lemma}
\label{lem:minXCMembership}
The extension complexity of the membership polytope of a player in an instance of Blotto with \battlefields{} battlefields and \troops{} troops for each player is at least \orderlowerbound {}.
\end{lemma}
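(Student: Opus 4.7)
My plan is to invoke Yannakakis's theorem $\xc{\polytopea} = \rk(\slack{\polytopea})$ together with the rectangle covering lower bound on non-negative rank. It suffices to exhibit a \emph{fooling set} of $\Omega(\troops^2\battlefields)$ positive entries in $\slack{\polytopea}$: a collection of positions such that for any two of them, the $2\times 2$ submatrix they determine also contains a zero. Since every combinatorial rectangle in a non-negative factorization of $\slack{\polytopea}$ can contain at most one fooling entry, this immediately gives $\xc{\polytopea} \ge \Omega(\troops^2\battlefields)$.

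I will index the fooling entries by triples $(k,i,j)$ where $k\in\{1,\dots,\battlefields\}$ and $\lceil(\troops+1)/2\rceil \le i < j \le \troops$, giving exactly $\exactlowerbound$ triples. For each triple I construct a pure strategy $v_{k,i,j}$ (a vertex of $\polytopea$) and a valid inequality $I_{k,i,j}$ of ``disjoint events'' type, bounding the sum of the marginal probabilities of placing at least $i$ troops on battlefield $k$ and at least $j$ troops on a designated partner battlefield. Since $i+j>\troops$, both events cannot co-occur on any pure strategy, so the inequality is valid and therefore contributes a row to $\slack{\polytopea}$. The vertex $v_{k,i,j}$ is then designed so that it has strictly positive slack against $I_{k,i,j}$ (placing the diagonal entry in the support), while any inequality $I_{k',i',j'}$ with different parameters becomes tight against it.

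The routine ingredients of the proof are: counting the triples, verifying validity of the $I_{k,i,j}$, and checking that each diagonal slack $\slack{\polytopea}_{I_{k,i,j},v_{k,i,j}}$ is positive. The core difficulty is the \emph{joint} design step — arranging vertices and inequalities so that for every pair of distinct triples $(k,i,j)\neq(k',i',j')$ at least one of the two off-diagonal slacks $\slack{\polytopea}_{I_{k,i,j},v_{k',i',j'}}$ or $\slack{\polytopea}_{I_{k',i',j'},v_{k,i,j}}$ vanishes. This is what promotes the count of triples into a rectangle-cover lower bound of matching order. The restriction to $i,j\ge\lceil(\troops+1)/2\rceil$ is the key combinatorial lever here: it guarantees validity (through $i+j>\troops$) and, with appropriate tie-breaking on $k$ versus the partner battlefield, makes the parameters $(k,i,j)$ readable off from the troop profile of $v_{k,i,j}$, so that a different $I_{k',i',j'}$ is automatically forced to be tight on $v_{k,i,j}$.

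The main obstacle is carrying out this fooling verification cleanly across all $\Theta(\troops^2\battlefields)$ triples rather than collapsing to a weaker $\Omega(\troops\battlefields)$ bound; I expect the accounting — pairing each $(i,j)$ with a compatible threshold on the partner battlefield and showing that crossed parameters always saturate one of the two inequalities — to be the most delicate part and to be what pins down the exact constant $\exactlowerbound$ rather than just its order of magnitude.
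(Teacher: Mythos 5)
Your high-level framework (Yannakakis's factorization theorem plus a fooling-set/rectangle-covering bound) matches the paper's, and your count of triples is the right one, but the concrete construction you propose cannot be completed: no fooling set of size $\omega(\battlefields^2\troops)$ exists inside the rows you allow yourself. Your rows are ``disjoint events'' inequalities $\Pr[\,\ge i \text{ on } k\,]+\Pr[\,\ge j \text{ on } k'\,]\le 1$, whose slack at a pure strategy placing $a$ troops on $k$ and $b$ on $k'$ equals $1-[a\ge i]-[b\ge j]$, i.e.\ it is positive exactly when $a<i$ and $b<j$. For a fixed ordered pair $(k,k')$, \emph{all} positive entries of \emph{all} such rows are covered by the $\troops+1$ all-positive rectangles $R_v=\{\text{rows }(i,j): i>v,\ j>\troops-v\}\times\{\text{strategies}: a\le v,\ b\le \troops-v\}$ for $v=0,\dots,\troops$: each $R_v$ is monochromatic, and a positive entry whose row satisfies $i+j\ge \troops+2$ (which holds throughout your range $i,j\ge\lceil(\troops+1)/2\rceil$, $i<j$) lies in $R_v$ with $v=\max(a,\troops-j+1)$. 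Since a fooling set meets each all-positive rectangle at most once, any fooling set supported on your rows has size at most $\troops+1$ per ordered pair of battlefields, hence $O(\battlefields^2\troops)$ in total, which is $o(\troops^2\battlefields)$ as soon as $\troops\gg\battlefields^2$. The ``joint design step'' you correctly flag as the crux is therefore genuinely impossible, not merely delicate: to separate $(k,i,j)$ from both $(k,i',j)$ and $(k,i,j')$ the witness vertex would have to remember both thresholds, i.e.\ place roughly $i-1$ troops on $k$ and $j-1$ on the partner, but $i+j\ge \troops+2$ makes that exceed the budget $\troops$; this budget constraint is precisely what collapses your $\Theta(\troops^2\battlefields)$ triples to $O(\battlefields^2\troops)$ usable ones.

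The paper's proof avoids this obstruction by not using marginal-space inequalities at all. It takes the slack matrix of the layered-graph (flow) description of the strategy polytope and restricts to the non-negativity rows $\flow_{k,i,j}\ge 0$ of the \emph{long} edges, those with $j-i>\troops/2$, of which there are \orderlowerbound{}. A column (pure strategy) is a canonical path, the slack is simply the indicator that the path uses that edge, and a path can use at most one long edge; hence no all-positive rectangle can touch two long-edge rows, and since every long-edge row has a positive entry, the cover needs at least as many rectangles as there are long edges. The essential resource is that a flow variable $\flow_{k,i,j}$ carries \emph{joint} (cumulative) information about the strategy that is not a linear function of the per-battlefield marginals; this supplies $\Theta(\troops^2)$ pairwise mutually exclusive constraints per battlefield, whereas pairwise marginal thresholds can supply only $\Theta(\troops)$ per pair of battlefields. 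To repair your argument, replace your inequalities by these flow non-negativity constraints (i.e.\ prove the bound for the polytope in its extended flow description, as the paper does); the fooling structure then comes for free and no vertex/inequality design is needed.
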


\begin{proof}
	Assume w.l.g. that we are trying to describe the polytope of all valid strategies of player $A$, denoted by $P$. One way of describing this polytope was explained in the LP described in Section \ref{lp}. Now from its membership constraints, only consider the ones that ensure the non-negativity of the flow passing through the edges of the layered graph of player $A$:
	\begin{equation}
	\label{eq:nonnegativity}
		\flow_{i,j,t} \geq 0 \qquad \forall i, j, t: 0\leq i \leq \battlefields-1, 0 \leq j \leq j+t \leq \troopsA
	\end{equation}
	From now on, only consider the part of the slack matrix corresponding to these constraints (we may occasionally call it the slack matrix), its columns as mentioned before, correspond to the vertices of the polytope, which in this case are all possible pure strategies of player $A$. Also its rows correspond to the mentioned constraints. Recall that any pure strategy is a canonical path in the layered graph of player $A$. Note that the slack matrix element corresponding to any arbitrarily chosen non-negativity constraint $e \geq 0$ and any arbitrary vertex $v_j$ corresponding to a pure strategy $S$ is $0$ iff the equivalent canonical path of $S$ does not contain $e$ and is $1$ if it does; since the elements of the slack matrix are calculated using the formula $\slack{P}_{ij} = b - A_iv_j$ and in this case $b$ is always zero and $A_iv_j$ is $-1$ iff $S$ contains the edge in the constraint and is zero otherwise. This implies it is only consisted of zero and one values.
	
	We call any edge $\flow_{b, i, j}$ with $j-i > \frac{\troops}{2}$ a long edge. A canonical path may only contain at most one such edge. On the other hand, any rectangle in the rectangle covering is basically a set of vertices and a set of constraints. Note that all the equivalent pure strategies of those vertices must contain the edges over which the constraints are defined. Therefore no rectangle could contain more than one constraint over long edges. The number of long edges in the layered graph is exactly
	\begin{equation}
		\exactlowerbound{}.
	\end{equation}
	Therefore the minimum number of rectangles to cover all non-negative elements of the slack matrix is at least of the same size and therefore \orderlowerbound{}. 
\end{proof}

\begin{theorem}
\label{XCThm}
	In an instance of Blotto with \battlefields{} battlefields and \troops{} troops for each player the extension complexity of \polytopea{} is \orderlowerbound{}. 
\end{theorem}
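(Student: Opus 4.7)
The plan is to establish matching upper and lower bounds on $\xc{\polytopea}$.

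For the upper bound $\xc{\polytopea} \leq \orderlowerbound$, I would appeal directly to the linear program constructed in the proof of the preceding Theorem (Linear Program~\ref{fig:detailedlp}). That LP lives in the higher-dimensional space parameterized by the flow variables $\flow_{k,i,j}$ together with the auxiliary variables $\pr{k}{i}$, $\weight{k}{i}$, and $\lpvard{k}{i}$. Fixing $\gpayoff$ to the value $v^\ast$ of the Blotto game, the feasible region is an extended polytope in this larger space whose projection onto the $\pr{\cdot}{\cdot}$-coordinates is exactly $\polytopea$. Since the LP uses only $\orderlowerbound$ inequalities, the definition of extension complexity immediately gives the claimed upper bound.

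For the matching lower bound $\xc{\polytopea} \geq \Omega(\troops^{2}\battlefields)$, I would leverage Lemma~\ref{lem:minXCMembership}, which establishes precisely this bound for the membership polytope of player A. The remaining step is to transfer the bound from the membership polytope to $\polytopea$. My plan is to exhibit a Blotto instance in which the two polytopes coincide as sets — equivalently, an instance in which every valid mixed strategy of player A happens to be a maxmin strategy, so that $\mainpolytopea$ is the entire space of A's mixed strategies. A natural candidate is a degenerate instance in which the payoff structure makes every A-strategy achieve the game value against every B-strategy; in such an instance $\polytopea$ equals the full membership polytope and Lemma~\ref{lem:minXCMembership} transfers immediately.

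The main obstacle is that the standard winner-take-all payoff function is never literally constant, so one must verify carefully that a valid Blotto instance realizes this equality. The key robustness observation is that the rectangle-covering argument inside Lemma~\ref{lem:minXCMembership} depends only on the long-edge non-negativity constraints in the layered graph and on the bijection between pure strategies and canonical paths. Thus, as soon as every pure strategy of A appears as a vertex of $\polytopea$ — equivalently, is supported by some Nash equilibrium strategy of the chosen instance — the same $\Omega(\troops^{2}\battlefields)$ rectangle-covering bound carries over to the slack matrix of $\polytopea$. I would verify this condition on a suitably symmetric instance (for example, one possessing a maxmin strategy whose support is all pure strategies), which suffices for the existential lower bound and, together with the upper bound above, completes the proof.
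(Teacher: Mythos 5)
Your proposal takes essentially the same route as the paper: the upper bound is read off from Linear Program~\ref{fig:detailedlp}, and the lower bound comes from Lemma~\ref{lem:minXCMembership} applied to a degenerate instance in which every strategy is maxmin, so that \polytopea{} coincides with the full membership polytope. Indeed, the paper's proof does exactly this by setting $\payoffa(\hat{x},\hat{y}) = 0$ for all $\hat{x},\hat{y}$. Your worry that no ``valid'' Blotto instance realizes constant payoffs is moot in this paper's model: the per-battlefield payoffs $\payoffb_k(i,l)$ are arbitrary inputs to the game (winner-take-all is only the special ``auctionary'' case discussed in the experiments), so the identically-zero payoff function is a legitimate instance, and this is precisely the instance the paper uses. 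One caution about your fallback argument: ``every pure strategy of A is a vertex of \polytopea{}'' is \emph{not} equivalent to ``every pure strategy lies in the support of some Nash equilibrium strategy.'' A pure strategy appearing in the support of an optimal mixed strategy need not itself be a maxmin strategy (matching pennies already shows this), and the rectangle-covering argument requires the columns of the slack matrix to range over the canonical paths themselves, i.e., it requires \polytopea{} to equal the whole membership polytope. Since your primary route (the constant-payoff instance) already guarantees this equality, drop the fallback; as stated it would not be sound.
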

\begin{proof}
	Assume the utility function is defined as follows:
	\begin{equation}
	\payoffa(\hat{x}, \hat{y}) = 0 \qquad \forall \hat{x}, \hat{y}.
	\end{equation}
	This means any possible strategy is a maxmin strategy for both players. In particular, the polytope of all possible maxmin strategies of any arbitrarily chosen player of this game, denoted by $P$ contains all possible valid strategies. Now using Lemma~\ref{lem:minXCMembership} we know $\xc{P}$ is at least \orderlowerbound{}. On the other hand, in Section~\ref{lp} we gave an LP with \orderlowerbound{} constraints to formulate the maxmin polytope, therefore the extension complexity of it is exactly \orderlowerbound{}.
\end{proof}

\section{Multi-Resource Colonel Blotto}
In this section we explain how our results could be generalized to solve Multi-Resource Colonel Blotto, or \textit{MRCB}. We define MRCB to be exactly the same game as  Colonel Blotto, except instead of having only one type of resource (troops), players may have any constant number of resource types. Examples of resource types would be time, money, energy, etc.

To solve MRCB we generalize some of the concepts we defined for Colonel Blotto. We first define generalized layered graphs and generalized canonical paths as follows:  
\begin{definition}[Generalized Layered Graph]
	Let $\troops_m$ denote the total number of available resources of $m$-th resource type for player $X$. The generalized layered graph of $X$ has $\battlefields \times \troops{}_1 \times \ldots \times \troops{}_{\resources}$ vertices denoted by $v(i, r_1, \ldots, r_\resources)$, with a directed edge from $v(i, r_1, \ldots, r_{m-1}, x, r_{m+1}, \ldots, r_\resources)$ to $v(i+1, r_1, \ldots, r_{m-1}, y, r_{m+1}, \ldots, r_\resources)$ for any possible $i$, $r$ and $0 \leq x \leq y \leq \troops_m$.
\end{definition}
\begin{definition}[Generalized Canonical Path]
	A generalized canonical path is defined over a generalized layered graph and is a directed path from $v_{0, 0, \ldots, 0}$ to $v_{\battlefields, \troops_1, \ldots, \troops_{\resources}}$.
\end{definition}
By these generalization we can simply prove that pure strategies of a player are equivalent to canonical paths in his generalized layered graph and there could be a surjective mapping from his mixed strategies to flows of size $1$ from $v(0, \ldots, 0)$ to $v(\battlefields, \troops_1, \ldots, \troops_\resources)$ using similar techniques we used in Section~\ref{lp}.
\begin{lemma}
	Each pure strategy for a player in an instance of MRCB is equivalent to exactly one generalized canonical path in the generalized layered graph of him and vice versa.
\end{lemma}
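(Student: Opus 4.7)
My plan is to mirror the proof of Lemma~\ref{pureLemma} for single-resource Colonel Blotto, lifting each step to handle a vector of cumulative resource usage instead of a scalar. The correspondence is still a coordinate-wise bookkeeping argument: a path tracks running totals of allocated resources by battlefield, and the increments between consecutive layers recover the per-battlefield allocations.

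First, for the forward direction (path $\Rightarrow$ strategy), I would take an arbitrary generalized canonical path $p$ from $v(0, 0, \ldots, 0)$ to $v(\battlefields, \troops_1, \ldots, \troops_\resources)$. For each layer transition from layer $i-1$ to layer $i$, let $(l_{i,1}, \ldots, l_{i,\resources})$ be the coordinate-wise difference between the endpoints of the corresponding edge(s). Define the candidate pure strategy by placing $l_{i,m}$ units of resource $m$ on battlefield $i$. The edge constraint $x \leq y$ on each coordinate forces $l_{i,m} \geq 0$, while the fact that $p$ begins at the all-zero vertex of layer $0$ and terminates at $v(\battlefields, \troops_1, \ldots, \troops_\resources)$ telescopes to $\sum_{i=1}^{\battlefields} l_{i,m} = \troops_m$ for every resource type $m$. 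This yields a valid pure strategy of MRCB.

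Second, for the reverse direction (strategy $\Rightarrow$ path), let $s$ be a valid pure strategy and, for each $1 \leq i \leq \battlefields$ and $1 \leq m \leq \resources$, set $s_{i,m}$ to be the total amount of resource $m$ allocated by $s$ to the first $i$ battlefields. I would then exhibit the sequence of vertices $v(0, 0, \ldots, 0), v(1, s_{1,1}, \ldots, s_{1,\resources}), \ldots, v(\battlefields, \troops_1, \ldots, \troops_\resources)$; since each cumulative coordinate is monotonically nondecreasing across layers, consecutive vertices are joined by an edge of the generalized layered graph, and the endpoints are the required source and sink. Finally, to see the correspondence is one-to-one, I would observe that reading the increments off the path constructed from $s$ recovers $s$ exactly, and building a path from the strategy extracted from $p$ recovers $p$ exactly; both compositions are identities, so the map is a bijection.

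The main point of care is the same as in the single-resource proof: making sure the single edge used between consecutive layers respects the graph definition when several resource coordinates jump simultaneously. If the definition is read as allowing only one coordinate to change per edge, I would instead chain $\resources$ intermediate sub-edges per battlefield transition (one per resource type) in a fixed canonical order; the same bookkeeping goes through, and the resulting correspondence between pure strategies and canonical paths remains bijective. Modulo this formatting of intermediate vertices, the argument is routine and follows the template of Lemma~\ref{pureLemma} line by line.
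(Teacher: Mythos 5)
Your proof is correct and is essentially the paper's own argument: the paper treats this lemma as an immediate generalization of Lemma~\ref{pureLemma}, proved by exactly the increment/cumulative-sum bookkeeping you give (path edges encode per-battlefield allocation vectors, partial sums of a strategy give the path vertices, and the two constructions are mutually inverse). Your main reading of the edge definition---that all resource coordinates may increase simultaneously across one layer transition---is the intended one (under the literal one-coordinate-per-edge reading the lemma would be false, e.g.\ no canonical path could reach $v(\battlefields, \troops_1, \ldots, \troops_\resources)$ when every battlefield must receive several resource types), so your primary argument rather than the fallback is the right one.
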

\begin{lemma}
	For any flow $f$ of size $1$ from $v(0, \ldots, 0)$ to $v(\battlefields, \troops_1, \ldots, \troops_\resources)$ in the generalized layered graph of a player with $\troops{}_i$ troops of type $i$, there is at least one mixed strategy with a polynomial size support that is mapped to $f$.
\end{lemma}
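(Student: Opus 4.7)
The plan is to mimic the proof of Lemma~\ref{flowtomixLemma} (the single-resource case) almost verbatim, since the only structural property of the layered graph that was used there was that it is a directed acyclic graph whose source is $v_{0,0}$ and whose sink is $v_{\battlefields,\troops}$. The generalized layered graph is still a DAG in which every edge goes from layer $i$ to layer $i+1$, with a unique source $v(0,\ldots,0)$ and unique sink $v(\battlefields,\troops_1,\ldots,\troops_\resources)$, so the same argument applies.

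Concretely, first I would invoke the classical flow-decomposition result: given a flow $f$ of size $1$ from $v(0,\ldots,0)$ to $v(\battlefields,\troops_1,\ldots,\troops_\resources)$, repeatedly find a source-to-sink path $p$ with strictly positive flow on every edge, let $\alpha_p$ denote the minimum flow value along that path, and subtract $\alpha_p$ from $f$ on every edge of $p$. Each iteration strictly zeroes out at least one edge, so the procedure terminates after at most $|E|$ iterations, where $|E|$ is the number of edges in the generalized layered graph. Since $c$ is constant, $|E|$ is bounded by $\battlefields \cdot \prod_{m=1}^{\resources}(\troops_m+1)^2$, which is polynomial in the input size; hence the decomposition produces a polynomial number of paths $p_1,\ldots,p_t$ with associated flow values $\alpha_1,\ldots,\alpha_t \geq 0$ summing to $1$.

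Next I would use the generalized version of Lemma~\ref{pureLemma} (the immediately preceding lemma in the MRCB section), which puts pure strategies into bijection with generalized canonical paths. Each $p_j$ is a source-to-sink path in the generalized layered graph, hence a generalized canonical path, and therefore corresponds to a unique pure strategy $s_j$. I would then define the mixed strategy that plays $s_j$ with probability $\alpha_j$; this is a valid distribution because $\sum_j \alpha_j = 1$ and each $\alpha_j \geq 0$. Finally I would verify that this mixed strategy maps to the original flow $f$ under \stratmappingaMRCB: the flow assigned to edge $e$ by the mixed strategy is $\sum_{j:\, e \in p_j} \alpha_j$, which is exactly the amount that was subtracted off edge $e$ during the decomposition, i.e.\ the original value $f(e)$. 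The support has size at most $t = O(|E|)$, which is polynomial.

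The only mildly subtle point, and the one I would be most careful about, is bounding the size of the support: in the single-resource case the edge count is obviously polynomial, whereas in MRCB one must note that the dependence on $\resources$ is exponential but since $\resources$ is treated as a constant, $|E| = \orderxcMRCB$ is polynomial in $\troops$ and $\battlefields$. Everything else is a routine transcription of the single-resource argument.
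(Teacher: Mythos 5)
Your proposal is correct and takes essentially the same route as the paper: the paper gives no standalone proof for this lemma, stating only that it follows ``using similar techniques'' from the flow-decomposition argument of Lemma~\ref{flowtomixLemma}, and that transcription is exactly what you carry out (decompose $f$ into at most $|E|$ path flows, map each path to a pure strategy via the generalized canonical-path bijection, and use the path flow values as probabilities). The two details you add --- that the edge count $\Theta(\troops^{2\resources}\battlefields)$ is polynomial only because $\resources$ is treated as a constant, and the explicit verification that the reconstructed mixed strategy maps back to $f$ --- are left implicit in the paper.
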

Using these properties, we can prove the following theorem:
\begin{theorem}
	In an instance of MRCB, $\polytopeaMRCB{}$ could be formulated with $O(\xcMRCB{})$ constraints and \variablesMRCB{} variables.
\end{theorem}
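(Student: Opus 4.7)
The plan is to lift the flow-based LP of Theorem in Section~\ref{lp} from a one-dimensional resource axis to an $\resources$-dimensional one, and then verify that every block of the LP still has size $O(\xcMRCB{})$.

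First, I would introduce flow variables on the generalized layered graph of player A. For every edge from $v(k,r_1,\ldots,r_\resources)$ to $v(k+1,r_1+j_1,\ldots,r_\resources+j_\resources)$ I introduce a variable $\flow_{k,(r_1,\ldots,r_\resources),(j_1,\ldots,j_\resources)} \geq 0$. The \textbf{membership constraints} then consist of: (i) non-negativity of each flow variable; (ii) flow conservation at every internal vertex, i.e.\ for each $1 \le k \le \battlefields-1$ and each tuple $(r_1,\ldots,r_\resources)$, the sum of incoming flows equals the sum of outgoing flows; (iii) a unit source condition at $v(0,0,\ldots,0)$ and a unit sink condition at $v(\battlefields,\troopsA_1,\ldots,\troopsA_\resources)$; (iv) zero-injection at all other layer-$0$ and layer-$\battlefields$ vertices. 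The dominant count here comes from (i), and the number of edges between two consecutive layers of the generalized layered graph is $\Theta(\troops^{2\resources})$ (one factor of $\troops^{\resources}$ for choosing the tail vertex and another for choosing the head vertex of the edge, per resource dimension), so across all $\battlefields$ layers this contributes $\Theta(\xcMRCB{}) = \Theta(\troops^{2\resources}\battlefields)$ variables and constraints. By the generalized version of Lemma~\ref{flowtomixLemma} any feasible flow corresponds to a mixed strategy.

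Next I would handle the \textbf{payoff constraints} by repeating the dynamic-programming construction in the multi-resource setting. Define marginal probabilities
\[
\pr{k}{j_1,\ldots,j_\resources}=\sum_{(r_1,\ldots,r_\resources)} \flow_{k,(r_1,\ldots,r_\resources),(j_1,\ldots,j_\resources)},
\]
and from these define
\[
\weight{k}{i_1,\ldots,i_\resources}=\sum_{(l_1,\ldots,l_\resources)} \pr{k}{l_1,\ldots,l_\resources}\cdot \payoffb_k\bigl((i_1,\ldots,i_\resources),(l_1,\ldots,l_\resources)\bigr),
\]
which is the expected payoff B gets in battlefield $k$ by putting $(i_1,\ldots,i_\resources)$ resources there. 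Both sets of equalities are linear and have at most $\Theta(\troops^{\resources}\battlefields)$ entries. Then, interpreting $\weight{k}{\cdot}$ as edge weights in B's generalized layered graph, I introduce $\lpvard{k}{i_1,\ldots,i_\resources}$ as the value of the maximum-weight path from $v(0,\ldots,0)$ to $v(k,i_1,\ldots,i_\resources)$ via
\[
\lpvard{k}{i_1,\ldots,i_\resources}\ge \lpvard{k-1}{j_1,\ldots,j_\resources}+\weight{k-1}{i_1-j_1,\ldots,i_\resources-j_\resources}
\]
for every valid $(j_1,\ldots,j_\resources)\le(i_1,\ldots,i_\resources)$, together with $\lpvard{0}{\cdot}=0$ and $\lpvard{\battlefields}{\troopsB_1,\ldots,\troopsB_\resources}\le -\gpayoff$. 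There is one such inequality per edge of B's generalized layered graph, so $\Theta(\xcMRCB{})$ constraints in total.

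Finally I would tally: flow variables $\Theta(\troops^{2\resources}\battlefields)$, probability/weight variables $O(\troops^{\resources}\battlefields)$, DP variables $O(\troops^{\resources}\battlefields)$, so the total variable count is $\variablesMRCB{}$; non-negativity and DP constraints dominate and give $O(\xcMRCB{})$ constraints. Correctness reduces to (a) the flow polytope on the generalized layered graph captures exactly the mixed strategies of A, which follows from the two generalized lemmas cited just before the theorem, and (b) the DP block correctly computes, against a fixed $\pr{}{}$, the value of the best pure reply of B, which is the value of the longest path in B's generalized layered graph, again by the generalized path/pure-strategy correspondence. The main obstacle is purely combinatorial bookkeeping---verifying that the edge count of the generalized layered graph is indeed $\Theta(\troops^{2\resources})$ per layer rather than something larger, and that the DP recurrence covers every such edge exactly once---since the LP structure itself is a routine lift of the $\resources=1$ case.
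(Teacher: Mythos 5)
Your proposal is correct and takes essentially the same approach as the paper: the paper's own proof is a two-sentence sketch that says to impose flow constraints over player A's generalized layered graph for membership and to reuse the Section~\ref{lp} dynamic-programming (marginal probabilities, edge weights, longest-path variables) construction over player B's generalized layered graph for the payoff bound, which is exactly the construction you spell out, with the correct count of $\Theta(\troops^{2\resources}\battlefields)$ edges per graph yielding the stated numbers of constraints and variables. Your version is simply a fully detailed instantiation of the argument the paper leaves implicit.
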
\label{lpMRCB}
\begin{proof}
	The linear program would again look like this:
\begin{alignat}{3}
	\max  \quad & \gpayoff & \\
    \text{s.t. }\qquad
    & \hat{x} \text{ is a valid strategy for player A}\nonumber\\
    & \payoffb(\hat{x},\hat{y}) \leq -\gpayoff \qquad \forall \hat{y}\nonumber
\end{alignat}
	For the first set of constraints (membership constraints) we can use the flow constraints over the generalized layered graph of player A to make sure we have a valid flow of size $1$ from $v(0, \ldots, 0)$ to $v(K, \troops_1, \ldots, \troops_\resources)$. And for the second constraint (payoff constraint) we can find the maximum payoff of player B using a very similar set of constraints to the described one in Section~\ref{lp}, but over the generalized layered graph of player B.
\end{proof}

We can also prove the following lowerbound for MRCB.
\begin{theorem}
	In an instance of MRCB, the extension complexity of $\polytopeaMRCB$ is $\Theta(\xcMRCB)$.
\end{theorem}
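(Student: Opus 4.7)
The plan is to mirror the two-step structure used to prove Theorem~\ref{XCThm}. For the upper bound, the previous theorem already exhibits an LP formulation of $\polytopeaMRCB$ with $O(\xcMRCB)$ constraints, so $\xc{\polytopeaMRCB} = O(\xcMRCB)$. What is left is the matching $\Omega(\xcMRCB)$ lower bound.

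For the lower bound, I would first reduce to bounding the extension complexity of the membership polytope of MRCB (the polytope of all valid mixed strategies for player $A$). Exactly as in Theorem~\ref{XCThm}, take the degenerate payoff function $\payoffa \equiv 0$, so every valid strategy is a maxmin strategy and the maxmin polytope coincides with the membership polytope. Any lower bound on the extension complexity of the latter therefore transfers to $\polytopeaMRCB$. I would then invoke Yannakakis's factorization theorem, reducing the task to a rectangle-covering lower bound on the restriction of the slack matrix to the flow non-negativity constraints $\flow_{i,\mathbf{r},\mathbf{r}'} \geq 0$ over edges of the generalized layered graph. As in Lemma~\ref{lem:minXCMembership}, this restricted slack matrix is $0/1$, with a $1$ in the entry indexed by (edge $e$, canonical path $p$) iff $p$ traverses $e$.

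The key combinatorial step is to generalize the ``long edge'' construction. Call an edge from $v(i, r_1, \ldots, r_\resources)$ to $v(i+1, r'_1, \ldots, r'_\resources)$ \emph{super-long} if, for every resource type $m$, one has $r'_m - r_m > \troops_m/2$; informally, this edge dumps more than half of every resource on a single battlefield. No canonical path from $v(0,\ldots,0)$ to $v(\battlefields, \troops_1,\ldots,\troops_\resources)$ can contain two super-long edges, since their combined increment in resource $m$ would exceed $\troops_m$, contradicting the fact that the total increment along any canonical path equals $\troops_m$ exactly. Consequently, in any rectangle covering the support of the slack matrix, each rectangle contains at most one super-long edge among its rows, because otherwise every column (path) of that rectangle would have to traverse two super-long edges. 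Hence the cover requires at least as many rectangles as there are super-long edges.

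Finally, a direct count gives the bound. For each of the $\battlefields$ battlefield transitions, and independently for each resource type $m$, the number of pairs $(r_m, r'_m)$ with $0 \leq r_m \leq r'_m \leq \troops_m$ and $r'_m - r_m > \troops_m/2$ is $\Theta(\troops_m^2)$. Multiplying across resources and battlefields yields $\Theta\!\left(\battlefields \prod_{m=1}^{\resources} \troops_m^2\right) = \Theta(\xcMRCB)$ super-long edges, so $\xc{\polytopeaMRCB} = \Omega(\xcMRCB)$, and together with the upper bound this gives $\Theta(\xcMRCB)$. The main obstacle is ensuring the ``at most one super-long edge per path'' invariant is correctly formulated in the multi-dimensional setting; once that is in place, the rectangle-covering and counting arguments proceed in direct parallel to the single-resource proof.
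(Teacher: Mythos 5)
Your proof is correct, and it follows the paper's high-level strategy --- reduce to the membership polytope via the identically-zero payoff function, invoke Yannakakis's factorization theorem, and apply a rectangle-covering bound to the flow non-negativity block of the slack matrix --- but your key combinatorial step is genuinely different from the paper's. The paper keeps the notion of a long edge one-dimensional: an edge is long if its increment in \emph{some single} resource type $m$ exceeds $\troops_m/2$. A generalized canonical path can contain up to $\resources$ such edges (at most one per resource type), so a rectangle may cover up to $\resources$ long rows, and the paper's lower bound is the number of long edges divided by $\resources$, with the standing assumption that $\resources$ is constant absorbing that division. Your ``super-long'' edges (increment exceeding $\troops_m/2$ in \emph{every} coordinate $m$) instead restore the exact invariant of Lemma~\ref{lem:minXCMembership} --- at most one per path, hence at most one per rectangle --- so no division by $\resources$ is needed. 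The trade-off sits in the counting: super-long edges form only a $2^{-\Theta(\resources)}$ fraction of all edges, since you impose a constant-density condition independently in each of the $\resources$ coordinates, whereas the paper's long edges form a constant fraction of all edges independent of $\resources$; your hidden constant thus degrades exponentially in $\resources$ while the paper's degrades only linearly. Since MRCB is defined with $\resources = O(1)$, both arguments give $\Omega(\xcMRCB)$, which combined with the LP upper bound yields $\orderxcMRCB$. One incidental advantage of your version: by letting all coordinates vary across a single edge, you work with the intended generalized layered graph, in which an edge allocates a whole vector of resources to the next battlefield; the paper's long-edge notation, read literally, varies only one coordinate per edge, and under that reading its claim that there are $\Theta(\xcMRCB)$ long edges would fail for $\resources \geq 2$, so your formulation is also the more careful one.
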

\begin{proof}
	The proof is very similar to the proof of Theorem~\ref{XCThm}. We only consider the rectangle covering lower bound over the part of the slack matrix corresponding to the non-negativity of flow through edges in the maxmin. We call an edge from $v(i, r_1, \ldots, r_{m-1}, x, r_{m+1}, \ldots, r_\resources)$ to $v(i+1, r_1, \ldots, r_{m-1}, y, r_{m+1}, \ldots, r_\resources)$ long if $y-x > \frac{n_m}{2}$. No generalzied canonical path could contain more than $\resources$ long edges therefore no rectangle could cover more than $\resources$ constraints. On the other hand there are $\Theta(\xcMRCB{})$ long edges in the layered graph. Since $\resources$ is a constant number the extension complexity is $\Omega(\xcMRCB)$. Moreover since we already gave a possible formulation with $O(\xcMRCB)$ constraints in Theorem~\ref{lpMRCB} the extension complexity is also $O(\xcMRCB)$ and therefore $\Theta(\xcMRCB)$. 
\end{proof}

\section{Experimental Results}
We implemented the algorithm described in Section~\ref{lp} using Simplex method to solve the LP. We ran the code on a machine with a dual-core processor and an 8GB memory. The running time and the number of constraints of the LP for each input is shown in Table~\ref{tbl:runtime}.
\begin{table}[]
\centering
\begin{tabular}{|c|c|c|c|c|}
\hline
$\battlefields$ & $\troopsA$ & $\troopsB$ & Constraints & Running Time \\
\hline
10 & 20 & 20 & 3595 & 0m3.575s\\
10 & 20 & 25 & 4855 & 0m3.993s\\
10 & 20 & 30 & 6365 & 0m6.695s\\
10 & 25 & 25 & 5295 & 0m8.245s\\
10 & 25 & 30 & 6805 & 0m7.502s\\
10 & 30 & 30 & 7320 & 0m30.955s\\
15 & 20 & 20 & 5065 & 0m14.965s\\
15 & 20 & 25 & 6950 & 0m11.842s\\
15 & 20 & 30 & 9210 & 0m24.196s\\
15 & 25 & 25 & 7440 & 0m46.165s\\
15 & 25 & 30 & 9700 & 0m31.714s\\
15 & 30 & 30 & 10265 & 2m20.776s\\
20 & 20 & 20 & 6535 & 0m46.282s\\
20 & 20 & 25 & 9045 & 0m35.758s\\
20 & 20 & 30 & 12055 & 0m38.507s\\
20 & 25 & 25 & 9585 & 1m38.367s\\
20 & 25 & 30 & 12595 & 0m51.795s\\
20 & 30 & 30 & 13210 & 9m13.288s\\
\hline
\end{tabular}
\caption{The number of constraints and the running time of the implemented Colonel Blotto based on different inputs. The first column shows the number of battlefields, the second and third columns show the number of troops of player A and B respectively. The number of constraints does not include the non-negativity constraints since by default every variable was assumed to be non-negative in the library we used.
}
\label{tbl:runtime}
\end{table}
Using this fast implementation we were able to run the code for different cases. In this section we will mention some of our observations that mostly confirm the theoretical predications.

An instance of Colonel Blotto is symmetric if the payoff function is the same for all battlefields, or in other words for any pure strategies $x$ and $y$ for player A and B and for any two battlefields $i$ and $j$, $\payoffa_i(x, y) = \payoffa_j(x, y)$. Also, an instance of blotto is auctionary if the player allocating more troops in a battlefield wins it (gets more payoff over that battlefield). More formally in an auctionary instance of Colonel Blotto, if $x$ and $y$ are some pure strategies for player A and B respectively, then
\begin{equation*}
	    \payoffa_i(x, y)= 
\begin{dcases}
    +w(i), & \text{if } x_i > y_i\\
    0, & \text{if } x_i = y_i\\
    -w(i), & \text{otherwise}
\end{dcases}
\end{equation*}
Recall that $x_i$ and $y_i$ denote the amount of troops A and B put in the $i$-th battlefield respectively.

Note that in an auctionary Colonel Blotto if $\troopsA \geq (\troopsB+1)\battlefields$, then by putting $\troopsB+1$ troops in each battlefield, player A wins all the battlefields and gets the maximum possible overall payoff. On the other hand if $\troopsA=\troopsB$, the payoff of player A in any Nash equilibrium is exactly $0$ because there is no difference between player A and player B by definition of an auctionary Colonel Blotto if $\troopsA=\troopsB$, and any strategy for A could also be used for B and vice versa. W.l.g. we can ignore the case where $\troopsA < \troopsB$. However, it is not easy to guess the payoff of A in a Nash equilibrium if $\troopsB \leq \troopsA < (\troopsB+1)\battlefields$. After running the code for different inputs, we noticed the growth of $\payoffa{}$ with respect to $\troopsA$ (when $\troopsB$ is fixed) has a common shape for all inputs. Figure~\ref{fig:charts} shows the chart for different values of $\troopsA$, $\troopsB$ and $\troops$.


\begin{figure}
	\centering
	\includegraphics[scale=0.6]{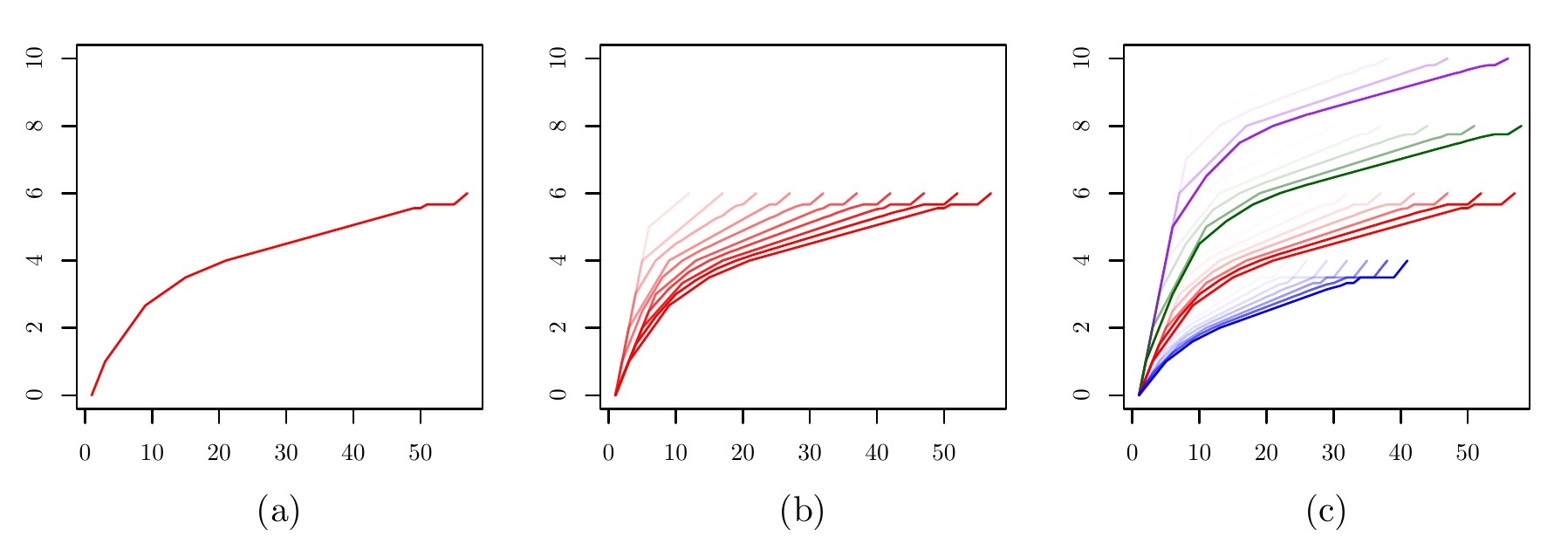}
	\caption{The y-axis is the payoff of A in the Nash equilibrium and the x-axis shows the value of $\troopsA - \troopsB$. In Figure (a), $\battlefields=6$ and $\troopsB=10$. In Figure (b), $\battlefields=6$ and for different values of $\troopsB$ in the range of 1 to 10 the same diagram as Figure (a-) is drawn. Figure (c) is the same plot as Figure (b) but for different values of $\battlefields$. For instance for the blue lines $\battlefields=4$, for the red lines $\battlefields=6$, for the green lines $\battlefields=8$ and for the purple lines $\battlefields=10$. In all examples payoff function of player A over a battlefield $i$, is $sgn(x_i-y_i)$ where $x_i$ and $y_i$ denote the number of troops A and B put in the $i$-th battlefield respectively.}
	\label{fig:charts}
\end{figure}

There has been several attempts to mathematically find the optimum payoff of players under different conditions. For example Roberson \cite{R06} considered the continuous version of Colonel Blotto and solved it. Hart \cite{H07} solved the symmetric and auctionary model and solved it for some special cases. Little is known about whether it is possible to completely solve the discrete version when the game is symmetric and auctionary or not.

Surprisingly, we observed the payoff of players in the symmetric and auctionary discrete version, is very close to the continuous version Roberson considered. The payoffs are specially very close when the number of troops are large compared to the number of battlefields, making the strategies more flexible and more similar to the continuous version. Figure~\ref{fig:roberson} compares the payoffs in the aforementioned models. In Roberson's model in case of a tie, the player with more resources wins while in the normal case there is no such assumption; however a tie rarely happens since by adding any small amount of resources the player losing the battlefield would win it.
\begin{figure}
	\centering
	\includegraphics[scale=0.6]{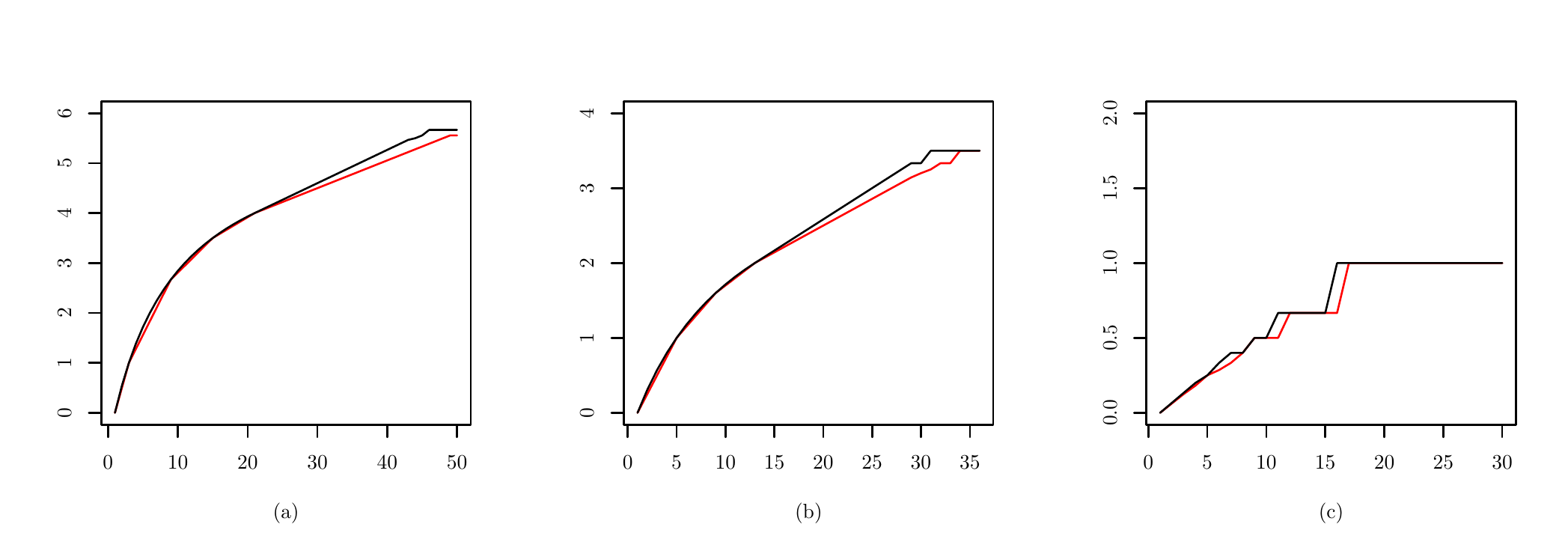}
	\caption{The y-axis is the payoff of A in the Nash equilibrium and the x-axis shows the value of $\troopsA - \troopsB$. The black and red line show the payoff in the continuous model and discrete model respectively. In figure (a), $\battlefields=6$ and $\troopsB=10$, in figure (b), $\battlefields=4$ and $\troopsB=12$ and in figure (c), $\battlefields=2$ and $\troopsB=30$.}
	\label{fig:roberson}
\end{figure}

\clearpage

\bibliographystyle{aaai}
\bibliography{blotto}


\end{document}